\documentclass[conference]{IEEEtran}
\IEEEoverridecommandlockouts
\usepackage{cite}
\usepackage{amsmath,amssymb,amsfonts,amsthm}
\usepackage{bbm}
\usepackage{mathtools}
\usepackage{physics}
\usepackage{algorithmic}
\usepackage{enumitem}
\usepackage{empheq}
\usepackage{multirow}
\usepackage{hyperref}
\usepackage[skip=2.5pt]{caption}
\usepackage{graphicx}
\usepackage{subcaption}
\usepackage{textcomp}
\usepackage{nicefrac}
\usepackage{accents}
\usepackage[dvipsnames]{xcolor}

\DeclarePairedDelimiter\ceil{\lceil}{\rceil}

\newtheorem{proposition}{Proposition}
\newtheorem{definition}{Definition}[section]
\newtheorem{remark}{Remark}

\def\BibTeX{{\rm B\kern-.05em{\sc i\kern-.025em b}\kern-.08em
    T\kern-.1667em\lower.7ex\hbox{E}\kern-.125emX}}
\begin{document}
\title{On Utility-optimal Entanglement Routing in Quantum Networks}
\author{
\IEEEauthorblockN{Sounak Kar}
\IEEEauthorblockA{QuTech, TU Delft}
\and
\IEEEauthorblockN{Arpan Mukhopadhyay}
\IEEEauthorblockA{University of Warwick}
}

\maketitle
\looseness = -1
\begin{abstract}
Quantum networks are envisioned to enable reliable distribution and manipulation of quantum information across distances, forming the foundation of a future quantum internet.
The fair and efficient allocation of communication resources in such networks has been addressed through the quantum network utility maximization (QNUM) framework, which optimizes network utility under the assumption of predetermined routes for competing user demands.
In this work, we relax this assumption and aim to identify optimal routes that correspond to the maximum achievable network utility. 
Specifically, we formulate the single-path utility-based entanglement routing problem as a Mixed-Integer Convex Program (MICP).
The formulation is exact when negativity is chosen as the entanglement measure for utility quantification \textit{or} the network supports sufficiently high entanglement generation rates across demands.
For other entanglement measures considered, the formulation approximates the problem with over $99.99\%$ accuracy on evaluated real-world examples.
To improve computational tractability, we propose a randomized rounding-based heuristic and an upper bound via the relaxation of the MICP.
Furthermore, based on min-congestion routing, we introduce an alternative randomized heuristic and upper bound.
This heuristic is computationally faster, while both the heuristic and the upper bound often outperform their counterparts on considered real-world networks.
Our work provides the framework for extending classical flow-based and quality of service-aware routing concepts to quantum networks.
\end{abstract}

\begin{IEEEkeywords}
Entanglement routing, network utility maximization (NUM), quantum network utility maximization (QNUM), mixed-integer convex program (MICP), randomized rounding.
\end{IEEEkeywords}

\section{Introduction}\label{sec:intro}
% \vspace{-2pt}
\everypar{\looseness=-1}
Quantum networks hold the promise of facilitating tasks~\cite{bennett1984quantum,giovannetti2004quantum,broadbent2009universal} that are fundamentally beyond the reach of classical networks.
The performance of a communication network at both user and network levels is typically characterized by Quality of Service (QoS) metrics that capture the needs of underlying applications.
A common framework for optimizing QoS at the network level is network utility maximization (NUM)\cite{kelly1997charging,kelly1998rate}.
NUM quantifies user satisfaction through a utility function that maps the allocated network resources to individual utilities, which are then aggregated at the network level.
The network utility is subsequently maximized over all feasible resource allocations to achieve the most \textit{fair} and \textit{efficient} resource distribution.
In contrast to classical networks, where transmission rate typically constitutes the primary communication resource, user utility in quantum networks fundamentally depends on both entanglement generation rate and fidelity.
This problem is termed quantum network utility maximization (QNUM)\cite{vardoyan2023quantum}, and approaches for its efficient solution have been investigated in~\cite{kar2024convexification}.

\looseness = -1 Notably, QNUM addresses network utility maximization under the assumption that routes are predetermined for each demand, represented by pairs of end nodes.
In this work, we relax this assumption and seek to identify routes that correspond to the maximum achievable network utility.
In other words, our objective is to \textit{jointly} optimize network utility over both feasible routes and resource allocations.
Such problems have been studied in classical networks, arising naturally in the contexts of QoS routing~\cite{apostolopoulos1998quality}, optimal network pricing~\cite{lin2003multi}, and opportunistic routing in wireless networks~\cite{wu2008utility}.
Utility-based routing can be viewed as a generalization of network flow models~\cite{ahuja1993network}, which have been extensively employed for diverse objectives, including throughput maximization and congestion minimization, and remain relevant in modern networks~\cite{charikar2018multi}.
A general framework for addressing such problems was established through multipath utility maximization~\cite{lin2006utility}, which requires the candidate routes for each demand to be precomputed.
However, in general, the number of feasible routes for a given demand grows exponentially with network size.
% It is noteworthy that QNUM considers maximising network utility when routes are given for each demand, represented by pairs of end nodes.
% In this work, we take away the assumption of fixed routes and aim to find routes for which the highest achievable network utility is maximised.
% In other words, our objective is to maximise the network utility over possible routes and allocations.
% Such problems were studied in classical networks, which naturally arise in the context of QoS-routing~\cite{apostolopoulos1998quality}, optimal network pricing~\cite{lin2003multi} and opportunistic routing in wireless networks~\cite{wu2008utility}.
% Utility-based routing can be seen as a generalisation of network flow models of routing~\cite{ahuja1993network}, which has been extensively used for varying objectives such as throughput maximisation and congestion minimisation and continue to find applications 
% even in modern networks~\cite{charikar2018multi}.
% A general framework for handling such problems was laid out via multipath utility maximisation~\cite{lin2006utility}.
% Notably,~\cite{lin2006utility} requires the paths corresponding to each demand to be precomputed, and, in general, the number of paths for a given demand is exponential in the size of the network.

In the context of quantum networks, the problem of routing entanglement has recently received significant attention.
Rather than providing a comprehensive review of this expanding literature (see~\cite{abane2025entanglement} and references therein for an exhaustive overview), we focus on summarizing key approaches and clarifying how their objectives differ from ours.
Ref.~\cite{van2013path} proposes a method to minimize the number of quantum measurements required to establish an end-to-end (e2e) link by modifying cost inputs to Dijkstra's algorithm suitably, while~\cite{gyongyosi2017entanglement} proposes a decentralized method for shortest-path routing.
A majority of the literature has focused on optimization of entanglement generation rate.
For instance,~\cite{pant2019routing} presents greedy strategies for maximizing generation rate on 2D grids and extending to multi-demand rate region maximization, whereas~\cite{chakraborty2020entanglement} applies a multi-commodity flow formulation to maximize generation rate under multiple demands, enforcing fidelity guarantees by limiting hop counts. 
On the fidelity side,~\cite{li2022fidelity} derives low-complexity purification-enabled algorithms providing fidelity guarantees.
Moreover, dynamic settings have been explored in~\cite{pouryousef2023quantum}, addressing peak demands, aggregate rate or service delay, and in~\cite{le2022dqra}, aiming to satisfy the maximum number of user demands within deadlines.
However, entanglement routing from the perspective of fair and efficient resource allocation accounting for QoS perception of individual users remains largely unexplored, a gap this work seeks to address. 
We specifically study the single-path utility-based entanglement routing problem, where each demand is routed along a single path.

% In a naive approach, the single-path utility-based routing problem can be solved by addressing multiple instances of the QNUM problem, each corresponding to a feasible routing for given demands, and selecting the one with the highest network utility. 
% However, the number of feasible routes for a given demand grows exponentially with the number of network nodes $n$, rendering this approach computationally prohibitive.
% To address this, we make the following contributions:
Given demands, a naive approach would solve the single-path utility-based routing problem by evaluating the QNUM problem for each feasible routing and selecting the one with the highest network utility. 
However, the number of feasible routes grows exponentially with the number of network nodes $n$, rendering this approach computationally prohibitive. 
To address this, we make the following contributions:
\begin{itemize}
    \item \looseness = -1 We formulate the utility-based routing problem as a mixed-integer convex program (MICP) with $2kl$ binary and $6kl\!+\!k\!+\!l$ variables in total, where $k$ and $l$ denote the number of demands and links, respectively. Following~\cite{vardoyan2023quantum}, we consider Secret Key Fraction (SKF), Distillable Entanglement (DE), and negativity as entanglement measures~\cite{plenio2005introduction}, which quantify \textit{user satisfaction} for an allocated fidelity level. The MICP is exact for negativity, or when the network supports sufficiently high entanglement generation rates per demand. For other cases, we propose a method to bound the approximation error, which is seen to be below $0.001\%$ on considered real-world examples.
    \item While the MICP remains tractable for networks of moderate size, we propose to use its convex relaxation for (i) deriving an upper bound for the maximum achievable network utility and (ii) a randomized heuristic based on the idea of randomized rounding~\cite{raghavan1987randomized} for large networks.
    \item \looseness = -1 We further introduce a min-congestion-based~\cite{raghavan1987randomized} randomized heuristic, which is computationally faster, and an associated upper bound.
    On evaluated real-world examples, the second heuristic consistently yields superior \textit{average} performance, while the upper bound is often closer to the optimum than its counterpart.
    In practice, the better of the two can be selected for new instances.
    % This heuristic is computationally faster than the previous heuristic, while the upper bound calculation is faster compared to the original MICP. 
\end{itemize}  
The rest of the paper is structured as follows. 
Sect.~\ref{sec:prelims} introduces the network model and entangled link generation process. 
The general problem formulation is presented in Sect.~\ref{sec:assumptions}, followed by the direct link-based and min-congestion formulations in Sect.~\ref{sec:directSinglePath} and~\ref{sec:minCong}, respectively. 
Finally, Sect.~\ref{sec:eval} presents numerical evaluations on real-world networks. 
\section{Preliminaries}\label{sec:prelims}
% \vspace{-3pt}
\everypar{\looseness=-1}

% In this work, we consider a repeater network where the end nodes represent users and the intermediate nodes represent repeaters.
% Given a set of demands, represented by pairs of end nodes, our objective is to allocate resources—specifically, the entanglement generation rate and quality—across all possible routes connecting these node pairs, with the aim of maximising the network utility~\cite{kar2024convexification,vardoyan2023quantum}. 
% A key distinction between~\cite{kar2024convexification,vardoyan2023quantum} and our approach is that the selection of routes between the specified SD pairs is not predetermined. 
% Before formally presenting the utility-based entanglement routing problem, 
In this section, we provide a brief overview of the network structure and the state description of the quantum communication links within the network.
Notation-wise, we use $[m]$ to denote the set $\{1,2,\dots,m\}$ for $m \!\in\! \mathbb{N}$, whereas $B_j$ and $B_{:i}$ respectively denote the $j$th row and $i$th column of a matrix $B$.

\subsection{Network and demands}\label{sec:demands}
\everypar{\looseness=-1}

We represent the network topology as a graph ${G\!:=\!(V,E)}$, where $V$ denotes the set of nodes and $E$ denotes the set of edges.
The end nodes in $V$ correspond to users, and the edges in $E$ represent the direct quantum communication links between adjacent nodes.
We assume that we are given a set of demands $D$, comprising $k$ source-destination (SD) pairs
\begingroup
\setlength{\abovedisplayskip}{5pt}   % space above equations
\setlength{\belowdisplayskip}{5pt}   % space below equations
\setlength{\abovedisplayshortskip}{5pt}
\setlength{\belowdisplayshortskip}{5pt}
\begin{align}\label{eq:demands}
    D:=\{(s_i,t_i):s_i,t_i \in V,\, i \in [k]\}\,.
\end{align}
\endgroup
\looseness = -1 
% Here, $[k]:=\{1,2,\dots,k\}$.
A route is defined as a path in $G$ between the corresponding SD pair.
We restrict routes to \textit{simple} paths for each SD pair, a choice we will later show entails no loss of generality.
We assume that demand $(s_i,t_i)$ (alternatively, demand $i$) has $p_i$ many routes.
We index the routes sequentially, and the set of route indices serving demand $i$ is denoted as $I_i$.
% i.e., the set of route indices serving $(s_i,t_i)$ is given by $[P_i]\!\setminus\! [P_{i-1}]$, where ${P_i :=\sum_{m \in [i]} p_m}$ for $i \in [k]$ and $[P_0]:=\emptyset$.
Further, we prune the edges and vertices that are not incident on any route.
We update the notations $G,V,E$ to denote this pruned graph and assume that $G$ has $r$ routes and $l$ links in total.
That is,
\begingroup
\setlength{\abovedisplayskip}{5pt}   % space above equations
\setlength{\belowdisplayskip}{5pt}   % space below equations
\setlength{\abovedisplayshortskip}{2pt}
\setlength{\belowdisplayshortskip}{2pt}
\begin{align*}
    r = \sum\nolimits_{i \in [k]} p_i~, \quad l = |E|~.   
\end{align*}
\endgroup

\subsection{Link Generation and State Description}\label{sec:singleClick}
\everypar{\looseness=-1}

In our network graph, each edge represents a direct quantum communication link between adjacent users/repeaters.
We assume that entanglements in these elementary links are produced using the single-click protocol~\cite{cabrillo1999creation}, where the generated states have the following form:
\begin{align}\label{eq:singlePhoton}
    \rho = (1 - \alpha) \lvert \Psi^+ \rangle \langle \Psi^+ \rvert +  \alpha \lvert \uparrow \uparrow \rangle \langle \uparrow \uparrow \rvert~. 
\end{align}
Here $\alpha$ denotes the bright-state
population and $\ket{\Psi^+}$ is a Bell state orthogonal to the bright state $\ket{\uparrow \uparrow}$.
The probability of success of each generation attempt is given by 
\begingroup
\setlength{\abovedisplayskip}{5pt}   % space above equations
\setlength{\belowdisplayskip}{5pt}   % space below equations
\setlength{\abovedisplayshortskip}{2pt}
\setlength{\belowdisplayshortskip}{2pt}
\begin{align*}
    p_{\text{elem}} = 2 \kappa \eta \alpha~,
\end{align*}
\endgroup
where $\kappa \in (0,1)$ is a multiplicative factor accounting for the inefficiencies other than photon loss in the fiber. 
Further, $\eta$ denotes the transmissivity of the link from one end to the midpoint heralding station.
For a link of length $L$ km, its transmissivity can be calculated as $\eta = 10^{-0.02 L}$.

Motivated by the mathematical convenience of handling Werner states and the fact that any bipartite state can be transformed into a Werner state of the same fidelity 
% through the application of random transformations from a set of operations involving identical rotations on each qubit
~\cite{dur2007entanglement,bennett1996mixed}, we assume that the elementary links generated as~\eqref{eq:singlePhoton} are further converted to Werner states. 
Accordingly, the link state can be described as
\begingroup
\setlength{\abovedisplayskip}{5pt}   % space above equations
\setlength{\belowdisplayskip}{5pt}   % space below equations
\setlength{\abovedisplayshortskip}{2pt}
\setlength{\belowdisplayshortskip}{2pt}
\begin{align}\label{eq:werner}
    \rho_w = w \lvert \Psi^+ \rangle \langle \Psi^+ \rvert +  (1-w)~ \nicefrac{\mathbb{I}_4}{4}~. 
\end{align}
\endgroup
For the states in~\eqref{eq:singlePhoton} and~\eqref{eq:werner} to have same fidelity, we must have
\begingroup
\setlength{\abovedisplayskip}{5pt}   % space above equations
\setlength{\belowdisplayskip}{5pt}   % space below equations
\setlength{\abovedisplayshortskip}{2pt}
\setlength{\belowdisplayshortskip}{2pt}
\begin{align*}
    \vspace{0pt}1-\alpha = \frac{1+3w}{4}~, \quad \text{i.e.,} \quad \alpha = \frac{3(1-w)}{4}~.
\end{align*}
% We assume that $w$ takes values in its theoretically possible range $[0,1]$, i.e., we do not impose an additional cutoff upfront.
% As mentioned earlier, the state description~\eqref{eq:werner} also leads to a convenient Werner form for e2e entanglements, created by swapping the elementary links on a given route (see~\ref{A:entanglementSwap}).
%
If entanglement generation is attempted every $T$ unit of time, the resulting generation rate can be expressed as
\begingroup
\setlength{\abovedisplayskip}{5pt}   % space above equations
\setlength{\belowdisplayskip}{5pt}   % space below equations
\setlength{\abovedisplayshortskip}{5pt}
\setlength{\belowdisplayshortskip}{5pt}
\begin{align}\label{eq:genRate}
    \frac{p_{\text{elem}}}{T} = d(1 - w) ~\,\text{with} ~\, d:=\frac{3 \kappa \eta}{2T}\,,
\end{align}
\endgroup
which provides the effective rate-fidelity trade-off in elementary link generation.

\section{Assumptions and Formulation}\label{sec:assumptions}
\everypar{\looseness=-1}
% While our main focus is the single-path version of the utility-based routing problem, we first formulate the problem in its full generality in Sect.\ref{sec:multipath}, allowing for multipath routing and any entanglement measure~\cite{plenio2005introduction}.
% We then highlight the intractability of the general problem and present solution approaches for the single-path case in Sect.\ref{sec:singlePath}.
To describe the problem formally, we make the following assumptions, many of which are similar to that of~\cite{vardoyan2023quantum,kar2024convexification}.

\begin{enumerate}[label={A\arabic*},nolistsep,leftmargin=*]

\item \label{A:staticNet} \looseness = -1 \textit{Static network:} 
Analogous to classical NUM and QNUM, 
we aim to distribute communication resources
% , i.e., entanglement generation rate and quality, among demands 
in a static setting.
Specifically, routes and their corresponding rate-fidelity allocations are determined before the network goes into operation. 
We assume that each route supports a single application throughout the operational period, implying the \textit{derived value} of an allocation remains the same for a demand over time.
The network topology is also assumed to be fixed.

\item \label{A:entanglementSwap} \textit{Entanglement swapping:} 
End-to-end entanglement between the terminal nodes is established via a two-step process.  
First, entanglement is generated at the link level between adjacent nodes.  
Then, entanglement swapping is performed at intermediate repeater nodes along the path to produce e2e entanglement~\cite{pan1998experimental}.  
Following~\cite{vardoyan2023quantum,kar2024convexification}, we assume the simplified setting where link-level entanglements are generated simultaneously and swapped immediately, avoiding decoherence.  
Since swapping two Werner states yields another Werner state whose parameter equals the product of the initial parameters~\cite{munro2015inside}, this  enables a concise representation of the e2e link.

\item \label{A:attemptFreq} \looseness = -1 \textit{Rate and fidelity of entanglement generation:}  As described in Sect.~\ref{sec:singleClick}, the rate-fidelity trade-off of the elementary link generation is inspired by the single-click protocol and follows the realtion~\eqref{eq:genRate}.
That is, if we fix the fidelity of link $j \in [l]$ by setting its Werner parameter to $w_j$, the maximum entanglement generation rate will be $\mu_j \!:=\! d_j(1\!-\!w_j)$, where $d_j \!:=\! 3 \kappa_j \eta_j/ 2T_j$~\eqref{eq:genRate}.
Note that~\ref{A:staticNet} implies that $w_j$'s are set in advance and remain fixed throughout network operation.
Consequently, the contributions of the $j$th link towards the e2e Werner parameters are identical for all routes traversing it.
% Here, $\mu_j$ can be interpreted as the capacity of link $j$ when it produces Werner states with parameter $w_j$.

% \item \label{A:fixedFidelity} \textit{Arbitrary but fixed quality of entanglement:} The Werner parameter of a link-level entanglement ${w_j, j\in [l]}$ can be chosen arbitrarily for optimising the network utility but remains fixed once chosen.
% In other words, we perform a one-shot analysis, where the values of $w_j$'s are set in advance and cannot be adjusted dynamically during the operation of the network, as the routes are served concurrently.
% This also implies that the contributions of the $j$-th link towards the e2e Werner parameters are the same across the routes passing through that link.
% Observe from~\eqref{eq:genRate} that increasing the value of the Werner parameter reduces the entanglement generation rate. 
%
\item \label{A:utility} \looseness = -1 \textit{Utility of a demand:} 
We first define the overall (binary) link-route incidence matrix $A$, where $a_{jm}\!:=\!((A))_{jm}\!=\!1$ iff the $m$th route traverses link $j$.
Recall that we restrict our choice of routes to simple paths.
The link-route incidence matrices corresponding to valid single-path routings of the $k$ demands belong to the following set
\begingroup
\setlength{\abovedisplayskip}{5pt}   % space above equations
\setlength{\belowdisplayskip}{5pt}   % space below equations
\setlength{\abovedisplayshortskip}{2pt}
\setlength{\belowdisplayshortskip}{2pt}
\begin{flalign}\label{eq:setRoutes}
    \mathcal{P}(A) := \{\Tilde{A} \!\in \!\{0,1\}^{l \times k}\!: \Tilde{A}_{:i} \!=\! A_{:m}\,\text{for some}\,m\! \in \!I_i \}
\end{flalign}
\endgroup
Since each demand is served via a single route, we denote the rate allocated to demand $i$ by $x_i$ and the e2e Werner parameter by $u_i$.
As the total rate allocated to demands cannot exceed a link's maximum entanglement generation rate for any valid single-path routing $\tilde{A}$, we must have ${\sum_{i \in [k]}\tilde{a}_{ji} x_i \le \mu_j}$, where $\Tilde{a}_{ji} \!=\! ((\tilde{A}))_{ji}$.
Recall from~\ref{A:entanglementSwap} that the e2e Werner parameter is the product of link-level Werner parameters.
Given a routing $\Tilde{A}$, we then have ${u_i \!=\! \prod_{j \in [l]}w_j^{\Tilde{a}_{ji}}}$.
% If route $m$ serves demand~$i$, 
We quantify the suitability of the fidelity (e2e Werner parameter to be precise) allocation $u_i$ using a nonnegative non-decreasing function $f_i$, where 
\begingroup
\setlength{\abovedisplayskip}{5pt}   % space above equations
\setlength{\belowdisplayskip}{5pt}   % space below equations
\setlength{\abovedisplayshortskip}{2pt}
\setlength{\belowdisplayshortskip}{2pt}
\begin{align*}
    f_i:&[0,1] \to [0,\beta_i] 
    \\    & u_m \mapsto f_i(u_m)
\end{align*}
\endgroup

In the QNUM framework, $f_i$'s are generally taken to be entanglement measures~\cite{plenio2005introduction}, secret key fraction, or fidelity of teleportation.
% although fidelity  
% Note that $f_i$'s are assumed to be non-decreasing, and
Following~\cite{vardoyan2023quantum}, the utility of a demand is assumed to have the form $x_i f_i(u_i)$.
Importantly, the demand utility is a function of the routing $\tilde{A}$.
% Finally, the network utility is formulated as the product of the route utilities. 
% The product form of individual and network utilities ensures that a specified level of network utility is achieved only when each route is allocated both rate and fidelity adequately.
% Note that it is possible to have different forms for route and network utility functions than ours.

\item \label{A:networkUtility} \looseness = -1 \textit{Network utility:} 
% In the general multipath setting, a demand can be served by multiple routes.
% In order to define network utility, we first define the total utility of the $i$th demand $U_i$ as the sum of the corresponding route utilities:
% \begin{align}\label{eq:demandUtil}
%     U_i := \sum_{m \in I_i} x_m f_i(u_m)~,
% \end{align}
% %
% where $I_i$ denote the set of route indices serving demand~$i$.
For a given single-path routing $\tilde{A}$, the network utility $\mathcal{U}(\tilde{A})$ is defined as the product of the demand utilities:
% However, unlike the QNUM problem, the network utility is dependent on the choice of routing.
 % we denote the rate allocation 
 \begingroup
 \setlength{\abovedisplayskip}{5pt}   % space above equations
 \setlength{\belowdisplayskip}{5pt}   % space below equations
 \setlength{\abovedisplayshortskip}{2pt}
 \setlength{\belowdisplayshortskip}{2pt}
\begin{align}\label{eq:netUtil}
    \mathcal{U}(\tilde{A}) := \prod_{i \in [k]} x_i f_i\Big(\prod_{j \in [l]}w_j^{\Tilde{a}_{ji}}\Big).
\end{align}
\endgroup
Note that alternative forms for demand and network utility functions are possible.
However, the product form guarantees that a given network utility level is reached only when each demand receives sufficient rate and fidelity. 
In canonical NUM, the utility is typically written as $\ln \mathcal{U} = \sum_i \ln U_i$, where $U_i$ is the utility of demand $i$. 
In contrast, we follow the welfare economics convention \cite{johansson1991introduction}, from which NUM originates, to motivate our formulation. 
Both approaches are clearly equivalent for the purpose of utility maximization. 
\end{enumerate}
Based on these assumptions, we now formulate the utility-based entanglement routing problem.
A list of recurring notations is provided in Tab.~\ref{tab:defs} in the appendix.
% % \vspace{-4pt}
% A list of parameters describing the network and the auxiliary variables are given in Tab.~\ref{tab:defs}.

\subsection{The Utility-based Entanglement Routing Problem}\label{sec:multipath}
\everypar{\looseness=-1}
We denote the rate allocation vector for the demands by ${\vec{x} \!=\! (x_1,x_2,\dots,x_k)}$ and the Werner parameter vector for the links by ${\vec{w} \!=\! (w_1,w_2,\dots,w_l)}$.
Also, let $\Tilde{A}_j$ denote the $j$th row of routing $\tilde{A}$.
The single-path entanglement routing problem can then be written as
 \begingroup
\setlength{\abovedisplayskip}{5pt}   % space above equations
\setlength{\belowdisplayskip}{5pt}   % space below equations
\setlength{\abovedisplayshortskip}{2pt}
\setlength{\belowdisplayshortskip}{2pt}
\begin{align}\label{eq:uRouteCanonical}%% \vspace{-0pt}
  \max\limits_{\vec{x},\vec{w}, \tilde{A}} \quad & \prod_{i \in [k]} x_i f_i\bigg(\prod\limits_{j=1}^l w_j^{\tilde{a}_{ji}}\bigg)\Bigg) \nonumber\\
  \text{s.t.} \quad & \vec{0} \preceq \vec{x},\thickspace \text{(Non-negative rates)} \\
  % \quad & 0 < \sum_{m \in I_i}  x_m f_i\bigg(\prod\limits_{j=1}^l w_j^{a_{jm}}\bigg) \quad \forall i \in [k] \\
  %& \quad \text{(positive demand utility)} \nonumber\\
  & \vec{0} \preceq \vec{w} \preceq \vec{1},\thickspace \text{(Fidelity bounds)} \nonumber\\
  &\langle \tilde{A}_j,\vec{x} \rangle \!\le\! \mu_j \!=\! d_j(1\!-\!w_j) \thinspace~ \forall j \!\in\! [l],\thickspace \text{(Rate constraints)}\nonumber \\
  & \tilde{A} \in \mathcal{P}(A) .\thickspace \text{(Single path constraint)}\nonumber 
\end{align}
\endgroup
Here, $\preceq$ denotes element-wise inequality,
and $\langle \tilde{A}_j,\vec{x} \rangle$ denotes the dot product of $\tilde{A}_j$ and $\vec{x}$.
As argued in~\cite{kar2024convexification}, the monotonicity assumption on $f_i$'s lets us replace the inequality in the rate constraints in~\eqref{eq:uRouteCanonical} by the equality ${w_j \!=\! 1\!-\!\langle \tilde{A}_j,\vec{x} \rangle/d_j}$ and thereby eliminate $\vec{w}$:
% % \vspace{-5pt}
% \begin{align}\label{eq:uRouteTransformedInt}%% \vspace{-0pt}
%   \max\limits_{\vec{x}} \quad & \prod_{i \in [k]} \Bigg(\sum_{m \in I_i}  x_m f_i\bigg(\prod\limits_{j=1}^l \bigg( 1\!-\!\frac{\langle A_j,\vec{x} \rangle}{d_j}\bigg)^{a_{jm}}\bigg)\Bigg) \nonumber\\
%   \text{s.t.} \quad & \vec{0} \preceq \vec{x}~,\\
%   \quad & \langle A_j,\vec{x} \rangle \le d_j \quad \forall j \in [l]~.\nonumber
% \end{align}

% \looseness = -1
% \begin{remark}
%     The multupath entanglement routing problem~\eqref{eq:uRouteTransformedInt} in general has an exponential number of variables ($x_m$) in terms of the size of the network $n$ (i.e., $|V|$) as there are, in general, an exponential number of routes between a given SD pair.
%     In the related topic of multi-commodity flow problems, this issue is addressed by switching to a link-based formulation.
%     However, for the multipath entanglement routing problem, expressing the network utility in terms of link-level variables is non-trivial and we focus on the single-path version instead.
% \end{remark}
% % \vspace{-7pt}

% \subsection{Single-path Entanglement Routing}\label{sec:singlePath}

% \looseness = -1 We first provide a natural route-based formulation of the single-path version.
% We observe that 
% % \vspace{-0pt}
% \noindent where $\Tilde{A}_{:i}$ denotes the $i$th column of $\Tilde{A}$.
% Denoting $\Tilde{a}_{ji} := ((\Tilde{A}))_{ji}$ and  
% updating $\vec{x}$ as the \textit{resized} rate allocation vector ${\vec{x} = \!=\! (x_1,x_2,\dots,x_k)}$, the single-path routing problem can be expressed as

\begingroup
\setlength{\abovedisplayskip}{5pt}   % space above equations
\setlength{\belowdisplayskip}{5pt}   % space below equations
\setlength{\abovedisplayshortskip}{2pt}
\setlength{\belowdisplayshortskip}{2pt}
\begin{align}
  \max\limits_{\vec{x},\Tilde{A}} \quad & \prod_{i \in [k]} x_i f_i\bigg(\prod\limits_{j=1}^l \bigg( 1\!-\!\frac{\langle \tilde{A}_j,\vec{x} \rangle}{d_j}\bigg)^{\tilde{a}_{ji}}\bigg) \label{eq:uRouteSingleWd}\\
  \text{s.t.} \quad & \vec{0} \preceq \vec{x}~,\\
  & \langle \tilde{A}_j,\vec{x} \rangle \le d_j \quad \forall j \in [l]~, \label{eq:uRouteSingleWdMid}\\
  & \tilde{A} \in \mathcal{P}(A)~.\label{eq:uRouteSingleWdEnd}   
\end{align}
\endgroup
% \begingroup
% \setlength{\abovedisplayskip}{0pt}   % space above equations
% \setlength{\belowdisplayskip}{0pt}   % space below equations
% \setlength{\abovedisplayshortskip}{0pt}
% \setlength{\belowdisplayshortskip}{0pt}
% \begin{proposition}
%     The single-path entanglement routing problem~\eqref{eq:uRouteSingleWd}--\eqref{eq:uRouteSingleWdEnd} is in general NP-hard.
% \end{proposition}
% \endgroup

\begin{remark}
	Note that on a non-simple path, removal of the loops increases the value of the e2e Werner parameter and, by monotonicity of entanglement measures $f_i$, the value of the objective function~\eqref{eq:uRouteSingleWd} as well. Thus, restricting $A$ to simple paths can be done without loss of generality.
\end{remark}
\begin{remark}
    Entanglement measures require $f_i(u)\!=\!0$ for $u \!\le\! 1/3$ as Werner states are separable until the threshold Werner parameter value of $1/3$.
    However, if we only require $f_i$'s to be non-decreasing and non-negative, for the special case ${f_i(u)\!=\!1}$ and ${d_j\!=\!1}\, \forall i,j$, the objective function in~\eqref{eq:uRouteSingleWd} attains the value $1$ iff there exist edge-disjoint paths (EDP) for the given demands.
    Since the EDP problem is NP-hard~\cite{guruswami1999near}, this implies that under this relaxed assumption, the single-path entanglement routing problem cannot be solved efficiently.
\end{remark}
% % \vspace{-6pt}

\looseness=-1 For a given demand, the number of paths is in general exponential in the number of nodes $n$ (i.e., $|V|$), which implies that the number of candidate routings $\tilde{A}$ in~\eqref{eq:uRouteSingleWdEnd} grows exponentially with $n$.
To address this, we present a link-based formulation of~\eqref{eq:uRouteSingleWd}--\eqref{eq:uRouteSingleWdEnd}.
For certain entanglement measures $f_i$, the problem reduces to an MICP with $6kl\!+\!k\!+\!l$ variables (of which $2kl$ are binary) or can be \textit{closely} approximated by it.
% This contrasts with~\eqref{eq:uRouteSingleWdEnd}, where the number of feasible routings $\tilde{A}$ grows exponentially with the number of nodes $n$.
% % \vspace{-3pt}

\vspace{-10pt}
\section{Link-based Formulation 
}\label{sec:linkBased}
\vspace{-5pt}
\everypar{\looseness=-1} 

\subsection{Direct Formulation}\label{sec:directSinglePath}

To formulate the optimization problem solely in terms of demand-based link-level variables, we first convert the undirected network graph into a directed one by replacing each undirected edge with two directed edges, which are then re-labelled.  
We denote the indices of the directed edges (communication links) in the new graph by $j'$, in contrast to $j$ in the undirected graph.
We now introduce the variables required to define the single-path routing problem.
\begin{center}
% % \vspace{-3pt}
\begin{tabular}{p{0.1\columnwidth}p{0.8\columnwidth}}
  $\epsilon(j)$ & the set comprising two directed link indices corresponding to the undirected link $j$, $j \in [l]$ \\
  $\delta^+(v)$ & the set of indices of incoming links at $v$, $v \in V$ \\
  $\delta^-(v)$ & the index set of outgoing links from $v$, $v \in V$ \\
  % $\delta(v)$ & $\delta^+(v) \cup \delta^-(v)$ \\
  $x_{ij'}$ & the rate allocated to demand $i$ on the $j'$th directed link, $i \in [k], j' \in [2l]$ 
\end{tabular}
% % \vspace{-5pt}
\end{center}
%   
% Note that for indexing, we can use \textit{demand} and \textit{route} interchangeably in the single-path case.
% Also, 
We group the rate allocation variables into the following allocation vector  
\begingroup
\setlength{\abovedisplayskip}{5pt}   % space above equations
\setlength{\belowdisplayskip}{5pt}   % space below equations
\setlength{\abovedisplayshortskip}{2pt}
\setlength{\belowdisplayshortskip}{2pt}
\begin{align}
  \overset{\leftrightarrow}{x} := (x_{11},x_{12},\dots,x_{1\, 2l},\dots,x_{k1},x_{k2},\dots,x_{k\, 2l})\,.  
\end{align}
\endgroup

\looseness=-1 Now, the rate allocated to demand $i$ is given by the maximum allocation to this demand over links originating from its source node~$s_i$, i.e., $\max_{j' \in \delta^{-}(s_i)} x_{ij'}$.  
Also, the $(j,i)$th element $\tilde{a}_{ji}$ of the routing matrix $\tilde{A}$ is $1$ iff there is an allocation on this link in either direction, i.e., ${\sum_{j' \in \epsilon(j)}x_{ij'} \!>\! 0}$. 
We can thus rewrite the contribution of (undirected) link $j$ to the e2e Werner parameter of the route serving demand~$i$ in~\eqref{eq:uRouteSingleWd} as

\begingroup
\setlength{\abovedisplayskip}{5pt}   % space above equations
\setlength{\belowdisplayskip}{5pt}   % space below equations
\setlength{\abovedisplayshortskip}{2pt}
\setlength{\belowdisplayshortskip}{2pt}
\begin{align}\label{eq:wernerBidir}
  w_{ij}(\overset{\leftrightarrow}{x}) \!:=\! \bigg( 1\!-\!\frac{\langle \tilde{A}_j,\vec{x} \rangle}{d_j}\bigg)^{\tilde{a}_{ji}} \!\!\!\!=\! 1 \!-\!\mathbbm{1}_{\{ \!\!\sum\limits_{j' \in \epsilon(j)} \!\!x_{ij'} > 0\}}\!\!\!\! \sum_{i \in [k],\,j' \in\epsilon(j)} \!\!\!\!\!\!\nicefrac{x_{ij'}}{d_j}\,,
\end{align}
\endgroup
where $\mathbbm{1}_{\{\}}$ denotes the indicator function.
This gives the following link-based formulation of the single-path routing problem:

\begingroup
\setlength{\abovedisplayskip}{5pt}   % space above equations
\setlength{\belowdisplayskip}{5pt}   % space below equations
\setlength{\abovedisplayshortskip}{2pt}
\setlength{\belowdisplayshortskip}{2pt}
{\small
\begin{align}
\max_{\overset{\leftrightarrow}{x}}~~ & \prod_{i \in [k]} \!\! \Big( \max_{j' \in \delta^{-}(s_i)}\! x_{ij'} \!\Big) 
  f_i\bigg( \prod_{j = 1}^{l} w_{ij}\big(\overset{\leftrightarrow}{x}\big) \bigg) \label{eq:objDirect}\\
\text{s.t.} \quad 
& x_{ij'} \geq 0, ~~ \forall i \in [k], \; \forall j' \in [2l] \label{eq:nonNegDirect}\\
& \sum_{i \in [k], \, j' \in \epsilon(j)} x_{ij'} \le d_j, ~~ \forall j \in [l] \label{eq:capacityDirect}\\
& \sum_{j' \in \delta^{+}(v)} \!\!\!\!\mathbbm{1}_{\{x_{ij'}>0\}} \!\le\! 1, \,\forall v \!\in\! V, \!\! \sum_{j' \in \delta^{+}(s_i)} \!\!\!\!\!\mathbbm{1}_{\{x_{ij'}>0\}} \!=\! 0; \, \forall i \!\in\! [k] \label{eq:single1Direct}\\
& \sum_{j' \in \delta^{-}(v)} \!\!\!\!\mathbbm{1}_{\{x_{ij'}>0\}} \!\le\! 1, \,\forall v \!\in\! V, \!\! \sum_{j' \in \delta^{-}(t_i)} \!\!\!\!\!\mathbbm{1}_{\{x_{ij'}>0\}} \!=\! 0; \, \forall i \!\in\! [k] \label{eq:single2Direct}\\
& \sum_{j' \in \delta^{+}(v)}\!\!\!\! x_{ij'} -\!\!\!\! \sum_{j' \in \delta^{-}(v)}\!\!\!\! x_{ij'} \!=\! 0, 
~\forall v \!\notin\! \{s_i,t_i\}, \forall i \!\in\! [k] \label{eq:flowDirect}
\end{align}
}
\endgroup

We now explain the constraints:
\begin{itemize}
    \item \textit{Positivity}: rate allocations must be nonnegative~\eqref{eq:nonNegDirect}.
    \item \textit{Capacity constraints}: the total bidirectional allocation on the $j$th (undirected) link cannot exceed $d_j$~\eqref{eq:capacityDirect} (see~\eqref{eq:uRouteSingleWdMid}).
    \looseness = -1 \item \textit{Single path}: for each demand, at most one incoming link (resp. outgoing) of a vertex can have a positive rate allocation~\eqref{eq:single1Direct} (resp.~\eqref{eq:single2Direct}), while there are no allocations on the incoming (resp. outgoing) links to the source (resp. destination). This also ensures the absence of loops.
    \item \textit{Flow preservation}: for each demand, the incoming and outgoing rates must be equal at each node, except for the corresponding source and destination~\eqref{eq:flowDirect}.
\end{itemize}

\begin{proposition}
	The route-based~\eqref{eq:uRouteSingleWd}--\eqref{eq:uRouteSingleWdEnd} \textnormal{(\textbf{R})} and link-based~\eqref{eq:objDirect}--\eqref{eq:flowDirect} \textnormal{(\textbf{L})} formulations are equivalent.
\end{proposition}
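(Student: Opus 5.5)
We show that the optimal values of \textbf{(R)} and \textbf{(L)} coincide by constructing maps between feasible points in both directions that preserve the objective. One direction is exact; the other is an inequality, which suffices.

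\emph{From \textbf{(R)} to \textbf{(L)}.} Take a feasible $(\vec{x},\tilde{A})$. For each $i$, the column $\tilde{A}_{:i}$ is a simple $s_i$--$t_i$ path. Orient it from $s_i$ to $t_i$, and set $x_{ij'}=x_i$ on the directed copies of its links in that orientation and $x_{ij'}=0$ elsewhere.

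\begin{itemize}
\item Constraint \eqref{eq:nonNegDirect} holds since $x_i\ge 0$.
\item Constraint \eqref{eq:capacityDirect} is exactly \eqref{eq:uRouteSingleWdMid}, because $\sum_{j'\in\epsilon(j)}x_{ij'}=\tilde{a}_{ji}x_i$.
\item Simplicity gives at most one incoming and one outgoing used arc per vertex, and no arc into $s_i$ or out of $t_i$. This is \eqref{eq:single1Direct}--\eqref{eq:single2Direct}.
\item Constant flow along the path gives \eqref{eq:flowDirect}.
\end{itemize}

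For the objective, $\max_{j'\in\delta^-(s_i)}x_{ij'}=x_i$. The indicator in \eqref{eq:wernerBidir} equals $\tilde{a}_{ji}$, so $w_{ij}(\overset{\leftrightarrow}{x})=(1-\langle\tilde{A}_j,\vec{x}\rangle/d_j)^{\tilde{a}_{ji}}$ and the two objectives agree. The degenerate case $x_i=0$ is harmless: the \textbf{(R)} objective is then $0$, which any \textbf{(L)} point can match, for instance by zeroing demand $i$.

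\emph{From \textbf{(L)} to \textbf{(R)}.} This direction needs a structural lemma about the support $S_i=\{j':x_{ij'}>0\}$. By \eqref{eq:single1Direct}--\eqref{eq:single2Direct} each vertex has in- and out-degree at most one in $S_i$, so $S_i$ is a disjoint union of directed simple paths and directed cycles. By \eqref{eq:flowDirect}, a maximal path component must start at $s_i$ and end at $t_i$, since any other endpoint would violate conservation. Since $s_i$ has in-degree $0$ and out-degree at most $1$, there is at most one such path $P_i$, and flow is constant along it, with value $x_i:=\max_{j'\in\delta^-(s_i)}x_{ij'}$. If $P_i$ does not exist, the \textbf{(L)} objective for demand $i$ is $0$, and any route with zero rate matches it.

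Cycles and a path traversing both orientations of a link are not excluded by the constraints. We handle them by a \emph{cleaning step}: set to zero all $x_{ij'}$ outside $P_i$. This keeps all constraints valid, since capacities only loosen, degrees only drop, and conservation holds on cycles removed as a whole. It does not decrease the objective: each $w_{ij}$ weakly increases, because both the indicator and the sum shrink, and $f_i$ is non-decreasing. Any possible zero-length or degenerate issues are removed at the same time.

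After cleaning, define $\tilde{A}_{:i}$ as the incidence vector of $P_i$, viewed as an undirected simple path, so $\tilde{A}\in\mathcal{P}(A)$. Take $\vec{x}=(x_i)$. The computation of the first direction, run backwards, shows feasibility and equal objective. Hence $\mathrm{OPT}(\textbf{L})\le\mathrm{OPT}(\textbf{R})$, and combined with the first direction the two formulations are equivalent. Optimal solutions also map to each other through these constructions.

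\emph{Main obstacle.} The delicate step is the cleaning argument. We must check that a simple undirected path $P_i$ cannot use both orientations of one link; this holds because it is simple in the directed sense, so it revisits no vertex. We must also check carefully that removing cycle flow preserves \eqref{eq:flowDirect} and the $\le 1$ degree bounds, and that the monotonicity of every term in \eqref{eq:wernerBidir} really gives a weak improvement of the objective. This requires $w_{ij}\ge 0$, which follows from \eqref{eq:capacityDirect}, and that the indicator factor never increases.
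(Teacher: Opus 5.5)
Your proof is correct and follows the paper's skeleton. For (\textbf{R})$\to$(\textbf{L}) you place $x_i$ on the arcs of the oriented route. For the converse you use the degree bounds \eqref{eq:single1Direct}--\eqref{eq:single2Direct} and conservation \eqref{eq:flowDirect} to extract a unique simple $s_i$--$t_i$ path carrying constant flow.

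Where you genuinely differ is the decomposition of each support into vertex-disjoint paths and cycles, followed by the cleaning step, and this addition is needed. The paper asserts that the extracted (\textbf{R}) point has the \emph{same} objective value as the given (\textbf{L}) point. However, the constraints of (\textbf{L}) do not rule out a demand carrying positive flow on a directed cycle disjoint from its path. A $2$-cycle on both orientations of one link, for instance, respects the degree bounds and conservation. Such flow switches on the indicator and inflates the load in \eqref{eq:wernerBidir}, so the (\textbf{L}) value can be strictly below that of the extracted route. The paper's value-preserving correspondence therefore only covers cycle-free points.

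Your monotone cleaning map instead proves $\mathrm{OPT}(\textbf{L})\le\mathrm{OPT}(\textbf{R})$. Together with the exact forward map, this gives equality of optimal values and a correspondence of optimal solutions, which is all the statement needs. The price is that you no longer have an objective-preserving map for every feasible point, but equivalence does not require one.

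One wording fix: a path component can never use both orientations of a link, as you show later in your ``main obstacle'' paragraph. So your earlier sentence saying such paths are ``not excluded by the constraints'' should refer only to $2$-cycle components.
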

\begin{proof}[Proof]
    We need to show that any feasible solution of \textnormal{(\textbf{R})} corresponds to a feasible solution of \textnormal{(\textbf{L})} with the same objective value, and vice versa.
    The proof follows the standard technique for converting route-based formulation to path-based formulation in multicommodity flow problems.
    Given a feasible $(\vec{x},\tilde{A})$ for \textnormal{(\textbf{R})}, each demand $i$ is assigned a unique route $\tilde{A}_{:i}$ with allocation $x_i$. 
    This route decomposes into a sequence of adjacent directed links from $s_i$ to $t_i$. 
    Setting $x_{ij'} \!= \!x_i$ on these links and ${x_{ij'} \!=\! 0}$ elsewhere satisfies the positivity~\eqref{eq:nonNegDirect}, single-path~\eqref{eq:single1Direct}--\eqref{eq:single2Direct} and flow preservation~\eqref{eq:flowDirect} constraints by definition and the capacity constraint~\eqref{eq:capacityDirect} due to the capacity constraint~\eqref{eq:uRouteSingleWdMid} of (\textbf{R}).
    Also, the objective values match by the argument preceding~\eqref{eq:wernerBidir}.
    
    \looseness = -1 Conversely, for any feasible solution of \textnormal{(\textbf{L})}, constraint~\eqref{eq:single2Direct} implies that for each demand $i$, $x_{ij'}$ can be positive on at most one outgoing link from $s_i$. 
    If all outgoing allocations from $s_i$ are zero, any simple path from $s_i$ to $t_i$ may be selected, yielding a corresponding feasible solution of \textnormal{(\textbf{R})} with zero objective value in both formulations.
    \textit{Otherwise}, flow conservation~\eqref{eq:flowDirect} together with the constraint that, for each demand \( i \), \( x_{ij'} \) is positive on at most one outgoing link from every node and there is no outgoing flow from \( t_i \)~\eqref{eq:single2Direct}, implies that flows cannot split and that there is a unique route from \( s_i \) to \( t_i \) with positive allocation.
    Further, due to the constraint that \( x_{ij'} \) is positive on at most one incoming link to every node, with no incoming flow to \( s_i \)~\eqref{eq:single1Direct}, this route corresponds to a simple path.
    We have thus found a unique directed simple path \( \tilde{A}_{:i} \) from \( s_i \) to \( t_i \) and we set $x_i$ to the common nonzero value of $x_{ij'}$ along this path. 
    The equality of the objective values again follows from the discussion preceding~\eqref{eq:wernerBidir}.
\end{proof}

% We now update formulation (\textbf{L}) to make it an MICP.
We now update formulation (\textbf{L}) to obtain the following MICP. 
The required steps are explained subsequently.

\noindent
\fbox{%
  \begin{minipage}{\dimexpr\columnwidth-2\fboxsep-2\fboxrule\relax}
  \small
  \begin{align}
\min_{\substack{\overset{\leftrightarrow}{x},\, \vec{y},\,\vec{\sigma},\\ \vec{\gamma},\, \vec{v},\, \vec{z}}}  &\sum_{i \in [k]} \!\Big(\lambda_i \!-\!
  \hat{F}_i (z_i)\Big) \label{eq:modObjective}\\
\text{s.t.} \quad 
& \vec{0} \preceq \overset{\leftrightarrow}{x}\,, \\
& \!\!\! \sum_{j' \in \epsilon(j),\, i \in [k]} x_{ij'} \le d_j,\, \forall j \in [l] \\
& \sum_{j' \in \delta^{+}(v)}\!\!\! x_{ij'} = \!\!\! \sum_{j' \in \delta^{-}(v)}\!\!\! x_{ij'}, 
\, \forall v \!\notin\! \{s_i,t_i\}, \, \forall i \!\in\! [k] \\
& y_{ij'} \in \{0, 1\},\, \forall j' \!\in\! [2l],\, \forall i \!\in\! [k] \label{eq:yBool}\\
& x_{ij'} \le d_j y_{ij'},\, \epsilon(j) \ni j'; ~ \forall j' \in [2l],\,\forall i \!\in\! [k] \\
& x_{ij'} \ge \varepsilon y_{ij'},\, \forall j' \in [2l],\,\forall i \!\in\! [k] \\
&\sum_{j' \in \delta^{+}(v)} y_{ij'} \le 1, \,\forall v \!\in\! V\!\setminus\!\{s_i\} , \, \forall i \!\in\! [k] \label{eq:yIncoming}\\
&\sum_{j' \in \delta^{-}(v)} y_{ij'} \le 1, \,\forall v \!\in\! V\!\setminus\!\{t_i\} , \, \forall i \!\in\! [k] \label{eq:yOutgoing}\\
& \sum_{j' \in \delta^{-}(s_i)} y_{ij'} = 1, \sum_{j' \in \delta^{+}(s_i)} y_{ij'} = 0;\, \forall i \!\in\! [k] \label{eq:yOutgoingSource}\\
& \sum_{j' \in \delta^{-}(t_i)} y_{ij'} = 0, \sum_{j' \in \delta^{+}(t_i)} y_{ij'} = 1;\, \forall i \!\in\! [k] \label{eq:yIncomingSink}\\
& \sigma_{j} = \sum_{j' \in \epsilon(j),\, i \in [k]} \nicefrac{x_{ij'}}{d_j},\, \forall j \in  [l]\\
& \gamma_{ij} \le \sum_{j' \in \epsilon(j)} y_{ij'},\, \forall j \!\in [l]\,,\,\forall i \!\in\! [k] \\
& \gamma_{ij} \le \sigma_{j},\, \forall j \!\in [l]\,,\,\forall i \!\in\! [k] \\
& \gamma_{ij} \ge 0,\, \forall j \!\in [l]\,,\,\forall i \!\in\! [k] \\
& \gamma_{ij} \ge \sigma_{j}-1+\sum_{j' \in \epsilon(j)} y_{ij'},\, \forall j \!\in [l]\,,\,\forall i \!\in\! [k] \\
& v_{ij} \le \ln(1-\gamma_{ij}),\, \forall j \!\in [l]\,,\,\forall i \!\in\! [k] \\
& z_i = \sum_{j = 1}^{l} v_{ij},\, \forall i \in [k]\,. \label{eq:zMICP}
% \endgroup
\end{align}
  \end{minipage}%
}

\begin{itemize}
    \looseness = -1
   % \item \textit{Zero allocations}: the formulation produces solutions with simple paths for all demands since any loop in a candidate path can be removed to potentially improve the Werner parameter and thereby increase the objective value. 
    %Thus, for each demand, rate allocations on links (i)  incoming to the source node and (ii) outgoing from the destination node can be set to zero. 
    %For clarity of exposition, we do not explicitly remove these variables, although omitting them in practice improves computational efficiency.
    % Therefore, we drop the allocation variables $x_{ij'}$ for ${j' \!\in\! \delta^+(s_i)\cup \delta^-(t_i),\, i \!\in\! [k]}$ and \textit{reuse} the notation $\overset{\leftrightarrow}{x}$ for the updated allocation vector.
    % We also set $\delta^+(s_i)$ and $\delta^-(t_i)$ to $\emptyset$.
    % Moreover, we denote the set of demands with potentially nonzero allocation on the $j'$th link as $\Delta(j')$.
    % \imp{Max removal step is wrong, being updated. Except for Werner parameter in the next point, rest are wrong.}
    \item \textit{Surrogate variables for single path constraints}: we define
    \begin{align*}
        y_{ij'} = \mathbbm{1}_{\{x_{ij'}>0\}}\,,~\forall i \in [k],\, \forall j' \in [2l]\,,    
    \end{align*}
    and accordingly introduce the following constraint:
    \begin{align}\label{eq:yAsIndic}
      x_{ij'} \!\le\! d_j y_{ij'},\,x_{ij'} \!\ge\! \varepsilon y_{ij'},\,y_{ij'} \!\in\! \{0,1\},~\forall i,\,\forall j'\,,  
    \end{align}
    where $\varepsilon>0$ is arbitrarily small.
    Since the utility is zero when any demand lacks a positive allocation on at least one outgoing link from its source and one incoming link to its destination, we eliminate such cases by updating the single-path constraints~\eqref{eq:single1Direct}–\eqref{eq:single2Direct} to:
    \begingroup
    \setlength{\abovedisplayskip}{5pt}   % space above equations
    \setlength{\belowdisplayskip}{5pt}   % space below equations
    \setlength{\abovedisplayshortskip}{2pt}
    \setlength{\belowdisplayshortskip}{2pt}
    \begin{align}\label{eq:single1Direct_}
    \begin{aligned}
        &\sum_{j' \in \delta^{+}(v)} \!\!\!\!\!y_{ij'} \le 1,\, \forall v \!\in\! V\!\setminus\!\{s_i\}, \sum_{j' \in \delta^{+}(s_i)} \!\!\!\!\!y_{ij'}= 0;\, \forall i \!\in\! [k] \\
        &\sum_{j' \in \delta^{-}(v)} \!\!\!\!\!y_{ij'} \le 1,\, \forall v \!\in\! V\!\setminus\!\{t_i\}, \sum_{j' \in \delta^{-}(t_i)} \!\!\!\!\!y_{ij'}= 0;\, \forall i \!\in\! [k] \\
        &\sum_{j' \in \delta^{-}(s_i)} \!\!\! y_{ij'} = 1, \sum_{j' \in \delta^{+}(t_i)} \!\!\! y_{ij'} = 1;\, \forall i \!\in\! [k]
    \end{aligned}
    \end{align}
    \endgroup
    Also, absence of loops implies
    \begin{align}\label{eq:y+}
        y^+_{ij} := \sum_{j' \in \epsilon(j)} y_{ij'} \le 1,~\text{i.e.,}~ y_{ij}^+ \in \{0,1\}\, \forall i, j.
    \end{align}
    % % \vspace{-3pt}
    We group the binary surrogates into vectors $\vec{y}$ and $\vec{y}_+$.
    
    %for compactness.
    % \item \textit{Surrogate variables for max removal}: 
    % % constraints~\eqref{eq:single1Direct}, \eqref{eq:single2Direct}, \eqref{eq:flowDirect} imply
    % % $$\max_{j' \in \delta^{+}(v)} \!x_{ij'} = \max_{j' \in \delta^{-}(v)} \!x_{ij'}~\forall v \in V_0^c,\, \forall i \in [k]\,.$$
    % As a standard technique, we replace $\max_{j' \in \delta^{+}(v)} \!x_{ij'}$ and $\max_{j' \in \delta^{-}(v)} \!x_{ij'}$ by surrogate variables $y_{iv}$ for ${v \!\in\! V_0^c}$ and add constraints of the form ${y_{iv} \ge x_{ij'}}$ for all relevant $j'$s.
    % Also, $\max_{j' \in \delta^{-}(s_i)} \!x_{ij'}$ and $\max_{j' \in \delta^{+}(t_i)} \!x_{ij'}$ are respectively replaced by $y_{i s_i}$ and $y_{i t_i}$ for $i \in [k]$.
    \item \textit{Surrogate variables for Werner parameters}: we define
    \begingroup
    \setlength{\abovedisplayskip}{5pt}   % space above equations
    \setlength{\belowdisplayskip}{5pt}   % space below equations
    \setlength{\abovedisplayshortskip}{2pt}
    \setlength{\belowdisplayshortskip}{2pt}
    \begin{align}
        \sigma_{j} &:= \sum_{j' \in \epsilon(j),\, i \in [k]} \nicefrac{x_{ij'}}{d_j}\,,~ \forall j \in [l] \label{eq:constraintsWerner1}\\
        \gamma_{ij} &:= y^+_{ij}\, \sigma_{j}\,, ~\forall i \!\in [k]\,,\,\forall j \!\in\! [l] \label{eq:constraintsWerner2}\\
        v_{ij} &:=\ln(1-\gamma_{ij})\,, ~\forall i \!\in \![k]\,,\,\forall j \!\in\! [l] \label{eq:constraintsWerner3}\\
        z_i &:= \sum_{j = 1}^{l} v_{ij} \label{eq:constraintsWerner4}\,, ~\forall i \in [k]\,.
    \end{align}
    \endgroup
    That is, the Werner parameter $w_{ij}(\overset{\leftrightarrow}{x})$ from~\eqref{eq:wernerBidir} is reformulated as $e^{v_{ij}}$. 
    Eq.~\eqref{eq:constraintsWerner1} and ~\eqref{eq:constraintsWerner4} are linear constraints.
    We modify~\eqref{eq:constraintsWerner3} to 
    \begingroup
    \setlength{\abovedisplayskip}{5pt}   % space above equations
    \setlength{\belowdisplayskip}{5pt}   % space below equations
    \setlength{\abovedisplayshortskip}{2pt}
    \setlength{\belowdisplayshortskip}{2pt}
    \begin{align}\label{eq:constraintsWerner3.1}
      v_{ij} \le \ln(1-\gamma_{ij})\,, ~ \forall i \!\in \![k]\,,\,\forall j \!\in\! [l]
    \end{align}
    \endgroup
    which corresponds to a convex region due to concavity of the RHS.
    This does not relax the problem, as the objective function is non-decreasing in $v_{ij}$'s.
    Further, using ${0 \!\le\! \sigma_j \!\le\! 1}$ (capacity constraint) and~\eqref{eq:y+},
    we enforce~\eqref{eq:constraintsWerner2} via its exact McCormick relaxation~\cite{mccormick1976computability}:
    \begingroup
    \setlength{\abovedisplayskip}{5pt}   % space above equations
    \setlength{\belowdisplayskip}{5pt}   % space below equations
    \setlength{\abovedisplayshortskip}{2pt}
    \setlength{\belowdisplayshortskip}{2pt}
    \begin{align}
    \left. 
    \begin{aligned} 
     \gamma_{ij} &\le y_{ij}^+,\, \gamma_{ij} \ge 0\,, \\ 
     \gamma_{ij} &\le \sigma_{ij},\, \gamma_{ij} \ge \sigma_{ij} \!-\!(1\!-\!y_{ij}^+)
    \end{aligned} 
    \right\} ~\forall i \!\in\! [k],\,\forall j \!\in\! [l]
    \label{eq:constraintsWerner2.1} 
    \end{align}
    \endgroup
    We then replace $y^+_{ij}$ in~\eqref{eq:constraintsWerner2.1} by $\sum_{j' \in \epsilon(j)} y_{ij'}$ and denote the respective vectors of the surrogates as $\vec{\sigma}, \vec{\gamma}, \vec{v}, \vec{z}$.
    \item \textit{Objective function}: instead of maximizing the objective function from~\eqref{eq:objDirect}, we take logarithm and minimize its negation.
    Using the convention $\ln(0) \!=\! -\infty$, which acts as a barrier, the reformulated objective becomes
    \begingroup
    \setlength{\abovedisplayskip}{5pt}   % space above equations
    \setlength{\belowdisplayskip}{5pt}   % space below equations
    \setlength{\abovedisplayshortskip}{2pt}
    \setlength{\belowdisplayshortskip}{2pt}
    \begin{align*}
    - \sum_{i \in [k]} \bigg(\ln(\max_{j' \in \delta^{-}(s_i)}\! x_{ij'}) + \ln{f_i}(e^{z_i})\bigg) \,.
    \end{align*}
    \endgroup
    % This means that we need to initialise the optimisation algorithm in the domain $\{\vec{z}: f_i(z_i)>0\, \forall i\}$.
    Now, the term $-\ln(\max_{j' \in \delta^{-}(s_i)}\! x_{ij'})$ can be convexly reformulated using the perspective transformation~\cite{gunluk2010perspective}:
    \begingroup
    \setlength{\abovedisplayskip}{5pt}   % space above equations
    \setlength{\belowdisplayskip}{5pt}   % space below equations
    \setlength{\abovedisplayshortskip}{2pt}
    \setlength{\belowdisplayshortskip}{2pt}
    \begin{align}
    &\lambda_i = \sum_{j' \in \delta^{-}(s_i)} t_{ij'}\, \quad \text{with} \\
    &t_{ij'} =
    \begin{cases}
        -y_{ij'} \ln\!\left(\nicefrac{x_{ij'}}{y_{ij'}}\right) & y_{ij'} > 0 \\
        0 & x_{ij'} = 0,\; y_{ij'} = 0 \\
        \infty & \text{otherwise,}
    \end{cases}
    \end{align}
    \endgroup
    due to the presence of the constraint
    \begingroup
    \setlength{\abovedisplayskip}{5pt}   % space above equations
    \setlength{\belowdisplayskip}{5pt}   % space below equations
    \setlength{\abovedisplayshortskip}{2pt}
    \setlength{\belowdisplayshortskip}{2pt}
    \begin{align*}
    \sum_{j' \in \delta^{-}(s_i)} y_{ij'} = 1\,,\quad y_{ij'} \in \{0,1\}~\forall i\,.
    \end{align*}
    \endgroup
    Here, convexity of $-\ln$ ensures that the perspective reformulation is convex in $(\overset{\leftrightarrow}{x},\vec{y})$~\cite[Sect. 3.2.6]{boyd2004convex}. 
    
    For the fidelity component of the objective function, we introduce the shorthand ${F_i(z)\!:=\!\ln f_i(e^{z}), \, i\!\in\![k]}$.  
    Following~\cite{vardoyan2023quantum}, from now on we only consider three entanglement measures: SKF, a lower bound to DE, and negativity.  
    The function $F_i$ is concave for negativity, whereas for the first two measures, we use the corresponding concave envelope $\hat{F}_i$, which closely approximates $F_i$.  
    For the derivation and justification of this approximation see~\eqref{eq:Fhat} and Fig.~\ref{fig:fHat} in the Appendix.
    We will also empirically validate the approximation accuracy in Sect.~\ref{sec:eval}.
\end{itemize}

% The observations above imply that the single-path entanglement routing problem can be expressed as the following MICP for the specified choices of entanglement measures.

The observations above imply that the single-path entanglement routing problem can be expressed as the MICP~\eqref{eq:modObjective}--\eqref{eq:zMICP} for the specified choices of entanglement measures.

\begin{remark}
% \looseness = -1
The only potential source of inexactness in the formulation is the overestimator $\hat{F}_i$.
For networks with sufficiently high entanglement generation rates per demand, the MICP remains exact because $\hat{F}_i$ differs from $F_i$ only beyond a threshold $\hat{z}$. In high-rate regimes, the optimal solution does not exceed this threshold, as increasing the rate is more advantageous than improving fidelity ($f_i \!\le\! 1$ for considered entanglement measures).
This can be readily verified from the solution. Otherwise, rerunning the MICP with a concave underestimator $\breve{F}_i$ in place of $\hat{F}_i$ provides a bound on the approximation error, given by the gap between the two optimal values.
See Fig.~\ref{fig:fHat} and the appendix for further details.
%
% The only possible source of inexactness in the MICP formulation is the overestimator $\hat{F}_i$.
% For networks that sustain a sufficiently high entanglement generation rate per demand, however, the MICP remains exact.
% This holds because $\hat{F}_i$ diverges from $F_i$ only beyond a threshold $\hat{z}$, and in high-rate regimes the optimal solution never assigns fidelity beyond $\hat{z}$.
% This behavior can be directly confirmed from the obtained solution.
% If exactness is uncertain, a second MICP can be solved using the concave underestimator $\breve{F}_i$; the gap between the two optimal values then bounds the approximation error.
% The estimators $\hat{F}_i$ and $\breve{F}_i$ are shown in Fig.~\ref{fig:fHat}, with derivations provided in the appendix.
\end{remark}

The MICP can be solved using standard MICP solvers, which remains \textit{practical} for instances with moderate size.
For larger networks, we propose an upper bound and a randomized heuristic as follows.

\noindent \textbf{Upper bound}: we obtain a
convex relaxation of the MICP by modifying the integrality constraint~\eqref{eq:yBool} to:
${0 \!\le\! y_{ij'} \!\le\! 1,\, \forall i,j'.}$
We also explicitly include the following constraints, which were automatically true earlier due to~\eqref{eq:yBool}. 
The first constraint is introduced to mitigate path splitting and it allows us to drop either~\eqref{eq:yIncoming} or~\eqref{eq:yOutgoing}, while the second one enforces~\eqref{eq:y+}.
% To mitigate path splitting, we explicitly introduce:
\begingroup
\setlength{\abovedisplayskip}{5pt}   % space above equations
\setlength{\belowdisplayskip}{5pt}   % space below equations
\setlength{\abovedisplayshortskip}{2pt}
\setlength{\belowdisplayshortskip}{2pt}
\begin{align}
    &\sum_{j' \in \delta^{+}(v)}\!\!\! y_{ij'} = \!\! \sum_{j' \in \delta^{-}(v)}\!\!\! y_{ij'}, ~\forall v \!\notin\! \{s_i,t_i\}, \, \forall i \!\in\! [k]\,, \label{eq:flowY} \\
    &\sum_{j' \in \epsilon(j)} y_{ij'}\le 1, ~ \forall j \!\in [l]\,,\,\forall i \!\in\! [k]\,.
\end{align}
\endgroup
% which was automatically true earlier.

% Also, in absence of the original integrality constraints $y_{ij'} \!\in\! \{0,1\}$, we explicitly include~\eqref{eq:y+}:
% \begingroup
% \setlength{\abovedisplayskip}{4pt}   % space above equations
% \setlength{\belowdisplayskip}{2pt}   % space below equations
% \setlength{\abovedisplayshortskip}{0pt}
% \setlength{\belowdisplayshortskip}{0pt}
% \begin{align*}%\label{eq:flowY_}
%      \sum_{j' \in \epsilon(j)} y_{ij'}\le 1, ~ \forall j \!\in [l]\,,\,\forall i \!\in\! [k]\,.
% \end{align*}
% \endgroup

The relaxation provides an upper bound to the maximum achievable utility under single-path routing.

\noindent \textbf{Randomized heuristic}:
Following the idea of randomized rounding~\cite{raghavan1987randomized}, we introduce a heuristic as follows:
\begin{itemize}
    \item \textit{Route sampling}: for each demand, the output variables $y_{ij'}$ from the relaxed MICP are interpreted as probabilities. 
    Employing the path-stripping algorithm~\cite{raghavan1987randomized}, we extract directed paths between each source and destination, with corresponding weights. 
    Because the total outgoing and incoming probabilities at the source and destination, respectively, sum to one (constraints~\eqref{eq:yOutgoingSource} and~\eqref{eq:yIncomingSink}) and flow is conserved at intermediate nodes~\eqref{eq:flowY}, this procedure indeed yields a valid probability distribution over output routes for each demand; see~\cite{raghavan1987randomized} for details.
    \looseness = -1
    \item \textit{Allocation optimization for fixed routes}: once a fixed route is sampled for each demand, we use the convex QNUM formulation~\cite{kar2024convexification} to compute the optimal rate allocation $\vec{x}$.
    % that maximises the overall network utility.
    % For the entanglement measures considered in this work, this allocation subproblem is convex and can therefore be solved efficiently~\cite{kar2024convexification}.
\end{itemize}
% % \vspace{-3pt}

\looseness = -1 
\noindent Next, we propose an alternative heuristic and upper bound inspired by \textit{minimum-congestion routing} in classical networks~\cite{raghavan1987randomized}, both of which often outperform their counterparts on evaluated examples. 
The use of minimum-congestion routing is motivated by the dependence of a route’s e2e Werner parameter on the congestion of its constituent links~\eqref{eq:wernerBidir}.
Both the heuristic and upper bound begin by solving an LP that provides a lower bound on maximum congestion. 
Overall, the heuristic is computationally faster than the previous one, while the upper bound requires solving an MICP with $k$ binary and $8k\!+\!1$ total variables in the second step, substantially fewer than in the original formulation.
% % \vspace{-3pt}

\subsection{Minimum Congestion-based Heuristic and Upper Bound}\label{sec:minCong}
% % \vspace{-1pt}

\begin{definition}[Maximum congestion]
For a valid routing $\tilde{A}$ as introduced in~\eqref{eq:setRoutes}, its maximum congestion is defined as the highest number of routes traversing any single link, i.e., 
\begingroup
\setlength{\abovedisplayskip}{5pt}   % space above equations
\setlength{\belowdisplayskip}{5pt}   % space below equations
\setlength{\abovedisplayshortskip}{2pt}
\setlength{\belowdisplayshortskip}{2pt}
$$c_\textnormal{max}(\tilde{A}) = \max(\tilde{A} \vec{1})\,.$$
\endgroup
\end{definition}
% % \vspace{-3pt}

\looseness=-1
As shown in~\cite[Sect.~3]{raghavan1987randomized}, minimizing the maximum congestion over all possible routings can be formulated as a mixed-integer linear program and then relaxed to an LP for large networks. 
Let $\underline{c}$ denote the optimal congestion value obtained from the LP; then $\lceil \underline{c} \rceil$ provides a lower bound on the maximum congestion. 
By interpreting the fractional LP solution as probabilities,~\cite{raghavan1987randomized} proposes a randomized routing strategy, which forms the basis of our heuristic.

\noindent\textbf{Randomized heuristic:}
For each demand, a route is sampled according to the randomized routing strategy of~\cite[Sect.~3]{raghavan1987randomized}.
Given these sampled routes, the convex QNUM formulation~\cite{kar2024convexification} is applied to compute the optimal rate allocation $\vec{x}$.
This heuristic is computationally faster than the previous one, as the min-congestion LP is faster than the MICP's relaxation.

\noindent \textbf{Upper bound}: first, we introduce the following notation: the $j$th largest element among the link-specific constants ${\vec{d} := (d_1,d_2,\dots, d_l)}$ is denoted by $d_{(j)}$. Given a lower bound to the maximum congestion level $\ceil{\underline{c}}$, we can obtain an upper bound to the achievable network utility by considering the following optimistic scenario: 
% % \vspace{-3pt}
\begin{itemize}
    \looseness = -1
    \item \textit{Single-link congestion:} we assume all but one link support at most one route, while the remaining (congested) link experiences a congestion level of $\lceil \underline{c} \rceil$.
    The corresponding rate–fidelity trade-off constant is taken as $d_{(1)}$.
    \item \textit{Shortest-path routing:} all demands are assumed to be routed through their respective shortest paths, with $\lambda_i$ denoting the corresponding path length for demand $i \!\in\! [k]$. 
    Furthermore, the e2e Werner parameter can be upper bounded by assuming that the rate–fidelity trade-off constants along uncongested links attain their best possible combination: $\{d_{(1)}, d_{(2)}, \dots, d_{(\lambda_i-1)}\}$.
    
\end{itemize}

\looseness = -1 Determining the maximum achievable network utility then reduces to selecting the subset of demands routed through the congested link. 
Let $\alpha_i \!\in\! \{0,1\}$ denote the route assignment variable for demand $i$, and let $\vec{\alpha}$ denote the corresponding vector. 
The upper bound can then be formally expressed as follows. 
% The proof follows directly from the reasoning above and is omitted due to space constraints.

\begin{proposition}
    Given a maximum congestion level $\ceil{\underline{c}}$, the maximum achievable network utility~\eqref{eq:uRouteSingleWd} can be bounded as
    {\small
    \begin{align}
        &\max\limits_{\substack{\vec{0} \preceq \vec{x},\,\Tilde{A} \vec{x} \preceq \vec{d}, \\ \tilde{A} \in \mathcal{P}(A)}} ~ \prod_{i \in [k]} x_i f_i\bigg(\prod\limits_{j=1}^l \bigg( 1\!-\!\frac{\langle \tilde{A}_j,\vec{x} \rangle}{d_j}\bigg)^{\tilde{a}_{ji}}\bigg)  \label{eq:optAssignBound}\\
        \le & \max_{(\vec{x}, \vec{\alpha}) \in S}\! \prod_{i \in [k]}\! x_i f_i\bigg(\!\Big(1\!-\!\!\frac{x_i\!+\! \alpha_i\sum_{i' \ne i}\! \alpha_{i'}x_{i'}}{d_{(1)}}\!\Big) \!\!\!\!\!\prod\limits_{j \in [\lambda_i-1]} \!\!\!\!\Big(\!1\!-\!\frac{x_i}{d_{(j)}}\Big)\!\bigg),\nonumber
    \end{align}}
    where
    \begin{align*}
        S\!:=\!\{(\vec{x}, \vec{\alpha}): & \vec{\alpha}'\vec{1} =\ceil{\underline{c}}, 0 \le x_i \le d_{(\lambda_i)}, \\
        &~x_i+\alpha_i \!\sum\limits_{i' \ne i}\!\! \alpha_{i'}x_{i'} \!\le\! d_{(1)} \forall i\}.
    \end{align*}
    $$$$
\end{proposition}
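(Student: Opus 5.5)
The plan is to take an optimal solution $(\vec{x},\tilde{A})$ of the left-hand side of~\eqref{eq:optAssignBound} and build from it a pair $(\vec{x},\vec{\alpha})\in S$, keeping the \emph{same} rate vector $\vec{x}$, whose right-hand-side objective is at least as large. It suffices to show two things: every factor $f_i(\cdot)$ on the left is dominated by the corresponding factor on the right, and the constructed pair satisfies the constraints of $S$. Monotonicity of the $f_i$ and nonnegativity of all terms then give the inequality term by term in the product.

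\emph{Choice of $\vec{\alpha}$.} By~\cite[Sect.~3]{raghavan1987randomized}, the LP value $\underline{c}$ lower-bounds the integral min-congestion. Hence $c_\textnormal{max}(\tilde{A})\ge\lceil\underline{c}\rceil$, so some link $j^*$ is traversed by at least $\lceil\underline{c}\rceil$ of the routes $\tilde{A}_{:i}$. I will pick any $\lceil\underline{c}\rceil$ of these demands and set $\alpha_i=1$ for them and $\alpha_i=0$ otherwise, so that $\vec{\alpha}'\vec{1}=\lceil\underline{c}\rceil$. The load on $j^*$ then satisfies
\[
\langle\tilde{A}_{j^*},\vec{x}\rangle\ge x_i+\alpha_i\sum_{i'\ne i}\alpha_{i'}x_{i'}\quad\text{for every } i .
\]
Together with the capacity $\langle\tilde{A}_{j^*},\vec{x}\rangle\le d_{j^*}\le d_{(1)}$, this gives the coupling constraint in $S$; for $\alpha_i=0$ it reduces to $x_i\le d_{(1)}$, which holds trivially. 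Next, route $i$ is a simple path with at least $\lambda_i$ distinct links, each carrying load at least $x_i$ and capacity $d_j\ge x_i$. The smallest $d_j$ among $\lambda_i$ distinct links is at most $d_{(\lambda_i)}$, so $x_i\le d_{(\lambda_i)}$. This establishes $(\vec{x},\vec{\alpha})\in S$.

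\emph{Bounding the e2e Werner parameter.} For demand $i$, designate one link $\hat{j}$ on its route: take $\hat{j}=j^*$ if $\alpha_i=1$, and any route link otherwise. Its factor is $1-\langle\tilde{A}_{\hat{j}},\vec{x}\rangle/d_{\hat{j}}$. This is at most $1-(x_i+\alpha_i\sum_{i'\ne i}\alpha_{i'}x_{i'})/d_{(1)}$, because the load is at least the numerator and $d_{\hat{j}}\le d_{(1)}$. All remaining factors lie in $[0,1]$, so I keep only $\lambda_i-1$ further distinct route links and drop the rest, which can only increase the product. Each kept factor satisfies $1-\text{load}_j/d_j\le 1-x_i/d_j$.

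\emph{Main obstacle: matching the kept links to the top capacities.} The remaining step is to show that the product $\prod(1-x_i/d_j)$ over the $\lambda_i-1$ kept links is at most $\prod_{j\in[\lambda_i-1]}(1-x_i/d_{(j)})$. The argument is a rearrangement/majorization one. Sort the kept capacities decreasingly. Since they are distinct entries of $\vec{d}$, the $m$-th largest of them is at most $d_{(m)}$. The map $d\mapsto 1-x_i/d$ is nondecreasing and nonnegative on the relevant range, because $x_i\le d_{(\lambda_i)}\le d_{(m)}$ and also $x_i\le d_j$ for every route link. So the bound follows factor by factor. The only care needed is that the designated link and the kept links are distinct, so that no entry of $\vec{d}$ is used twice on the left; the right-hand side reuses $d_{(1)}$ only as an over-estimate, which is harmless.

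Finally, applying the non-decreasing $f_i$ and multiplying over $i\in[k]$ yields the claimed inequality.
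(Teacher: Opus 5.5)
Your proof is correct and follows the same route as the paper. The paper fixes a feasible routing and takes a maximally congested link $j^*$, whose congestion is at least $\lceil\underline{c}\rceil$ by the LP bound; it then bounds that link's factor using $d_{(1)}$, bounds the remaining factors by the $\lambda_i-1$ largest capacities, and concludes by monotonicity of $f_i$. You also make explicit several steps the paper leaves implicit: selecting exactly $\lceil\underline{c}\rceil$ of the demands crossing $j^*$ so that $\vec{\alpha}'\vec{1}=\lceil\underline{c}\rceil$, the rearrangement argument, and the bound $x_i\le d_{(\lambda_i)}$. One small slip: your load inequality on $j^*$ stated ``for every $i$'' fails for demands not crossing $j^*$, but you only use it when $\alpha_i=1$, so the argument is unaffected.
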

\begin{proof}
    Let us fix a routing $\tilde{A} \!\in\! \mathcal{P}(A)$ and let $j^*$ denote the index of the link with maximum congestion under $\tilde{A}$.
    Now,
    % We denote the corresponding congestion level by $\tilde{c}$.
    % We will often drop the argument of $j^*$ for brevity. 
    % Under the optimistic scenario described above, the non-decreasing nature of $f_i$'s implies
    \begin{align*}
        & \prod\limits_{j=1}^l \bigg( 1\!-\!\frac{\langle \tilde{A}_j,\vec{x} \rangle}{d_j}\bigg)^{\tilde{a}_{ji}} \\
        = & \bigg( 1\!-\!\frac{\langle \tilde{A}_{j^*},\vec{x} \rangle}{d_j}\bigg)^{\tilde{a}_{j^* i}} \prod\limits_{j \ne j^*} \bigg( 1\!-\!\frac{\langle \tilde{A}_j,\vec{x} \rangle}{d_j}\bigg)^{\tilde{a}_{ji}} \\
        \le & 
        \begin{cases}
            \prod\limits_{j \in [\lambda_i]} \!\!\bigg(1\!-\!\frac{x_i}{d_{(j)}}\bigg),~ \tilde{a}_{j^* i} = 0 \\
            \bigg( 1\!-\!\frac{x_i+  \sum_{i' \ne i} \tilde{a}_{j^* i'} x_{i'}}{d_{(1)}}\bigg) \prod\limits_{j \in [\lambda_i-1]} \!\!\bigg(1\!-\!\frac{x_i}{d_{(j)}}\bigg),~ \tilde{a}_{j^* i} = 1 
        \end{cases}
    \end{align*}
    
    Therefore, the network utility can be bounded as
    \begin{align}
    % \label{eq:minConBd}
        & \prod_{i \in [k]} x_i f_i\bigg(\prod\limits_{j=1}^l \bigg( 1\!-\!\frac{\langle \tilde{A}_j,\vec{x} \rangle}{d_j}\bigg)^{\tilde{a}_{ji}}\bigg) \nonumber\\
        \le & \prod_{i \in [k]} x_i f_i\bigg(\bigg( 1\!-\!\frac{x_i\!+\! \tilde{a}_{j^* i} \sum_{i' \ne i} \tilde{a}_{j^* i'} x_{i'}}{d_{(1)}}\bigg) \!\!\prod\limits_{j \in [\lambda_i-1]} \!\!\bigg(1\!-\!\frac{x_i}{d_{(j)}}\bigg)\bigg) \nonumber
    \end{align}
    
    Now, we maximize the LHS over all possible routings subject to capacity constraints.
    To maximize the RHS, we first observe that for any routing $\tilde{A}$, $\tilde{A}_j \vec{1} \ge \ceil{\underline{c}},\, \forall j$.
    Further, the maximum allocation on route $i$ is bounded by $d_{(\lambda_i)}$ and any allocation must ensure that each $f_i$ has a non-negative argument on the RHS.
    This establishes~\eqref{eq:optAssignBound}.
\end{proof}

\looseness = -1 We formulate the RHS of~\eqref{eq:optAssignBound} as the following optimization problem. 
We again minimize the negative logarithm of the objective, introduce surrogates for the e2e Werner parameters, and approximate $F_i(z) (\!=\! \ln f_i(e^{z}))$ by $\hat{F}_i(z)$.

% % \vspace{-10pt}
\begin{align}
    \min\limits_{\substack{\vec{x},\vec{\alpha},\vec{\eta},S,\vec{\delta} \\ \vec{\zeta},\vec{\gamma},\vec{v},\vec{z} }} &-\sum_{i \in [k]} \ln{x_i} \!-\!\!\sum_{i \in [k]} \!\hat{F}_i\big(z_i\big) \label{eq:optAssign} \\ 
    \text { s.t. } & 0 \leq x_i \leq d_{\left(\lambda_i\right)} ,~ \forall i \in[k] \\ 
    & \alpha_i \in \{0,1\} ,~ \forall i \in[k]\,,\quad \vec{\alpha}'\vec{1}=\ceil{\underline{c}} \\ 
    & \eta_i=\alpha_i x_i ,~ \forall i \in[k] \label{eq:eta}\\ 
    & S=\sum_{i \in [k]} \eta_i \\ 
    & \delta_i=S-\eta_i ,~ \forall i \in[k] \\ 
    & \zeta_i=\alpha_i \delta_i ,~ \forall i \in[k] \label{eq:zeta}\\ 
    & \gamma_i=x_i+\zeta_i ,~ \forall i \in[k] \\ 
    & \gamma_i \le d_{(1)},~ \forall i \in[k] \\
    & v_{i 1}=\ln \left(1-\nicefrac{\gamma_i}{d_{(1)}}\right) ,~ \forall i \in[k] \label{eq:v1}\\
    & v_{i j}=\ln \big(1\!-\!\nicefrac{x_i}{d_{(j-1)}}\big) ,~ \forall j \!\in\![\lambda_i] \backslash\{1\}, \forall i \!\in\! [k] \label{eq:v}\\ 
    & z_i=\sum_{j \in [\lambda_i]} v_{i j}, ~\forall i \in [k] ~\label{eq:zMICPmc}
\end{align}
% % \vspace{-10pt}

\looseness = -1
Since $\alpha_i \!\in\! \{0,1\}$, we obtain an MICP formulation by replacing~\eqref{eq:eta} and~\eqref{eq:zeta} with their exact McCormick relaxations:
\begingroup
\setlength{\abovedisplayskip}{5pt}   % space above equations
\setlength{\belowdisplayskip}{5pt}   % space below equations
\setlength{\abovedisplayshortskip}{2pt}
\setlength{\belowdisplayshortskip}{2pt}
\begin{align*}
    & \eta_i \!\le\! d_{(\lambda_i)}\alpha_i,\, \eta_i \!\ge\! 0,~ \eta_i \!\le\! x_i,\, \eta_i \!\ge\! x_i \!-\!d_{(\lambda_i)}(1\!-\!\alpha_i)~ \forall i \\
    & \zeta_i \!\le\! d_{(1)}\alpha_i,\, \zeta_i \!\ge\! 0,~ \zeta_i \!\le\! \delta_i,~ \zeta_i \!\ge\! \delta_i \!-\!d_{(1)}(1\!-\!\alpha_i)~ \forall i~,
\end{align*}
\endgroup

\noindent \looseness = -1 and modifying~\eqref{eq:v1} and~\eqref{eq:v} by making $v_{ij}$s less than or equal to the respective RHS, which keeps the problem unchanged as $\hat{F}_i$'s are increasing.
In contrast to the original MICP, which involves $2kl$ binary variables, the present formulation requires only $k$ such variables, making it \textit{practical} to solve under wider circumstances.
For instances with a large number of demands, the integrality constraints can be relaxed to obtain an upper bound.

% is a monotonic (increasing) function
% \begin{align*}
% & v_{i 1} \le \ln \left(1-\gamma_i\right) ,~ \forall i \in[k] \\
%     & v_{i j} \le \ln \bigg(1\!-\!\frac{x_i}{d_{(j-1)}}\bigg) ,~ \forall j \!\in\![\lambda_i] \backslash\{1\}, \forall i \!\in\! [k] 
% \end{align*}
% This modification does not alter the problem due to $\hat{F}_i$ is a monotonic (increasing) function.
% Since the current MICP involves significantly fewer optimisation variables than the original formulation, its relaxation tends to yield a tighter upper bound in real-world network examples.

% Observe that
% \begin{align}
%     z_i^{\lambda_i} &= \Big(1\!-\!x_i\Big)^{\lambda_i-\alpha_i}\Big(1\!-\! \!\!\sum_{i' \in [k]}\!\! \alpha_{i'}x_{i'}\Big)^{\alpha_i}~ \forall i \\
%     \implies z_i &\le 1-x_i -\frac{\alpha_i}{\lambda_i} \sum_{i' \ne i}\!\! \alpha_{i'}x_{i'} =:\beta_i\,, \label{eq:amGm}
% \end{align}
% by AM-GM inequality.
% As we are now looking for a lower bound of~\eqref{eq:optAssign} and $F$ is a non-decreasing function, we change the last constraint in~\eqref{eq:optAssign} to $z_i = \beta_i$, which still does not yield a convex region due to the presence of the trilinear terms.
% Thus, we further relax the region to its convex hull via repeated application of McCormick relaxation.
% % \vspace{-5pt}
\section{Numerical Evaluations}\label{sec:eval}
\everypar{\looseness = -1}

In this section, we empirically evaluate the proposed heuristics and upper bounds on real-world networks. 
We consider the BREN ($10$ nodes, $11$ links), UNIC ($15$ nodes, $17$ links) and ARNES ($17$ nodes, $20$ links) topologies from TopologyBench~\cite{matzner2024topology}, a repository of optical fiber networks. 
The network sizes are selected such that the optimal utility computation via the MICP~\eqref{eq:modObjective}--\eqref{eq:zMICP} remains tractable, which we use for benchmarking the heuristics and upper bounds. 
The number of demands~$k$ is chosen proportional to the network size~$n$, specifically ${k \!\in\! \{4,6,8\}}$ for BREN and ${k \!\in\! \{6,8,10\}}$ for others. 
Further, demands (SD pairs) are added \textit{incrementally} according to the table below.

\setlength{\tabcolsep}{2pt} 
\begin{table}[h!!]
\centering
% \caption{Parameter definitions.}
\begin{tabular}{@{} l l @{}}
\hline
\textbf{Network} & \textbf{Demands (incremental)} \\ 
\hline
BREN & $$\{(6,10),(10,4),(3,10),(9,1)\},\,\{(5,1),(8,10)\},\,\{(7,2),(3,7)\}$$ \\
UNIC & $$\{(8,11),(5,13),(9,1),(2,8),(11,1),(10,2)\},\,\{(7,3),(11,14)\},\,\{(15,14),(1,9)\}$$ \\
ARNES & $$\{(2,3),(10,5),(16,2),(12,7),(3,4),(17,2)\},\,\{(15,12),(3,11)\},\,\{(3,15),(2,4)\}$$ \\
\hline
\end{tabular}
\end{table}

\noindent We refer the reader to~\cite{matzner2024topology} for detailed network information, including link lengths, node locations, and name–ID mapping.  
The topology is shown in Fig.~\ref{fig:topologies}.
Also, we show the evaluation for SKF and negativity in Fig.~\ref{fig:whole} and for DE in Fig.~\ref{fig:DE}.

\begin{figure*}[t]
\centering
\begin{subfigure}{0.31\textwidth}
    \centering
    \includegraphics[width=1\textwidth]{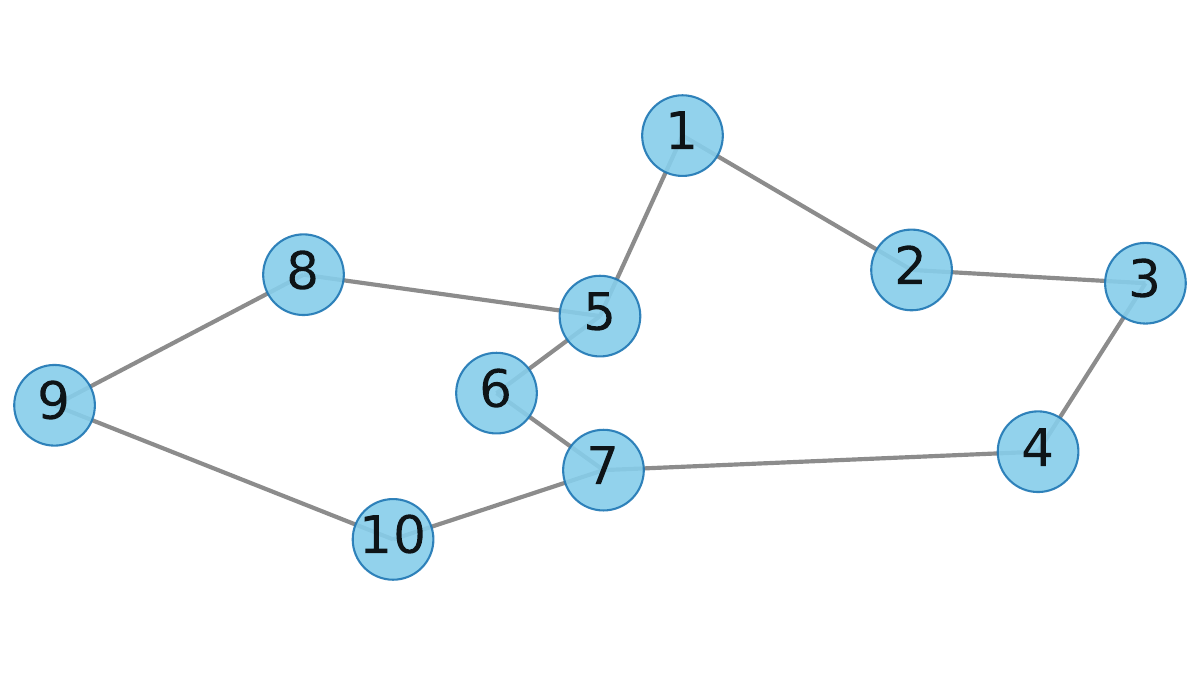}
    \caption{\label{fig:bren}%
    BREN}
\end{subfigure}
    \hfill
\begin{subfigure}{0.31\textwidth}
    \centering
    \includegraphics[width=1\textwidth]{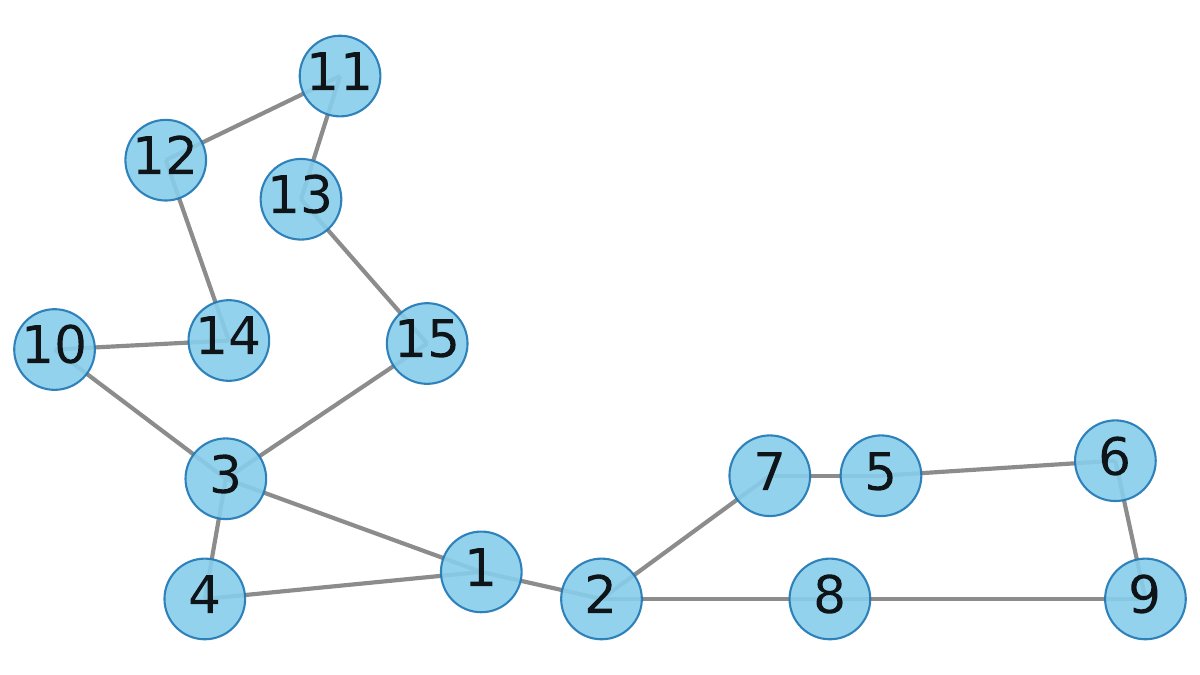}
    \caption{\label{fig:unic}%
    UNIC}
\end{subfigure}
\hfill
\begin{subfigure}{0.31\textwidth}
    \centering
    \includegraphics[width=1\textwidth]{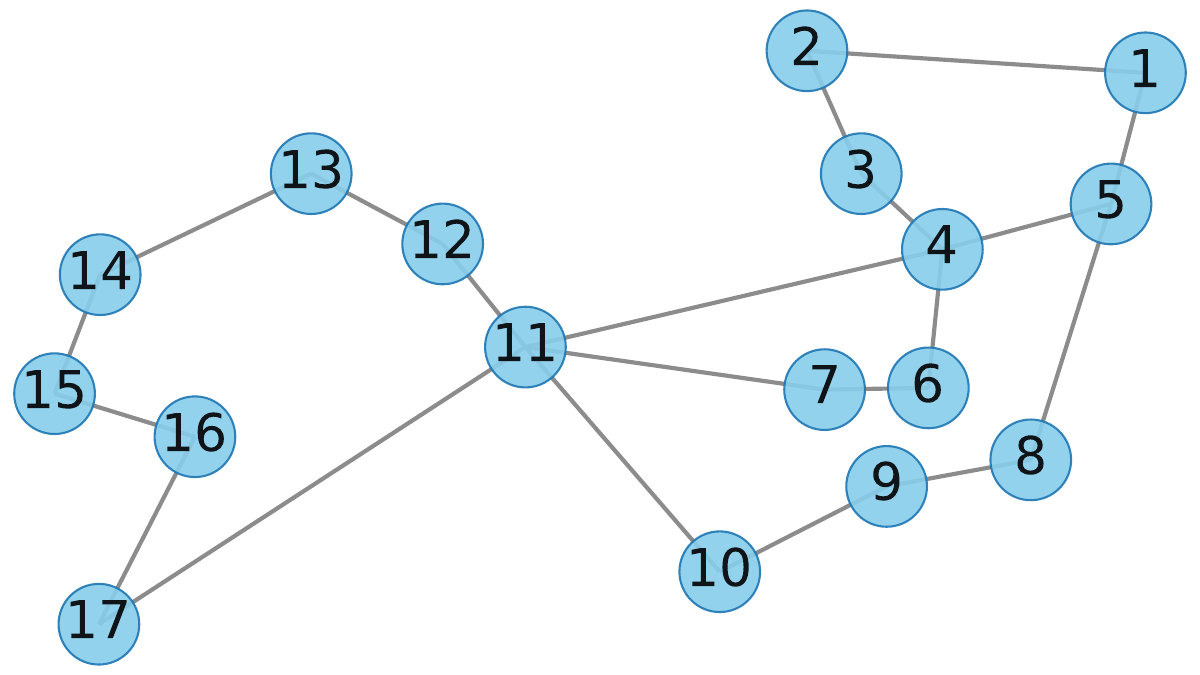}
    \caption{\label{fig:arnes}%
    ARNES}
\end{subfigure}%
\caption{\label{fig:topologies}%
Topologies of the real-world core optical networks~\cite{matzner2024topology} used for evaluations.}%
\vspace{-0pt}
\end{figure*}

\begin{figure*}[t]
    \centering
    \includegraphics[width=\textwidth]{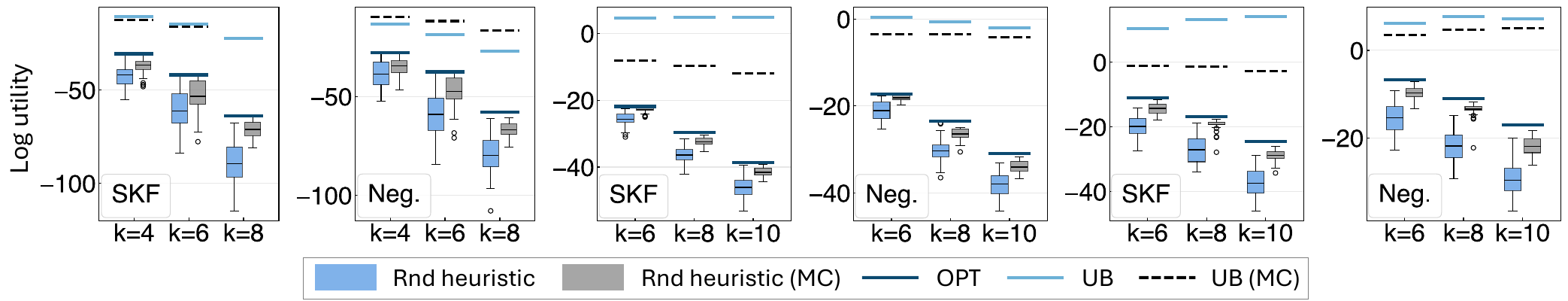}
\caption{\label{fig:whole}%
Performance of the randomized heuristics (Rnd heuristic) and upper bounds (UB) vis-à-vis the optimum log utility (OPT) calculated via MICP~\eqref{eq:modObjective}--\eqref{eq:zMICP}, min-congestion-based metrics are marked as (MC). Results are shown for entanglement measures SKF and negativity (Neg.) on BREN ($10$ nodes, left $2$ subplots), UNIC ($15$ nodes) and ARNES ($17$ nodes, right $2$ subplots) networks~\cite{matzner2024topology} for varying demand counts $k$. The MC heuristic outperforms its counterpart \textit{on average}, and the MC upper bound is often closer to OPT.
Recall that log utility here equals utility as per NUM~\cite{kelly1997charging,vardoyan2023quantum}.}%
\vspace{-0pt}
\end{figure*}

% % \vspace{-5pt}
% \noindent Due to space constraints, we refer the reader to~\cite{matzner2024topology} for detailed network information, including topology, link lengths, node locations and name–ID mapping.
% Also, here we provide the evaluation only for two entanglement measures: SKF and negativity.

\looseness = -1 Given the link lengths $L_j$~\cite{matzner2024topology}, the link-level constants $d_j$ are computed from~\eqref{eq:genRate} using $\eta_j \!=\! 10^{-0.02 L_j}$ and by setting $\kappa_j \!=\! 0.1$, $T_j \!=\! 10^{-3}$s for all links, which represents the current state of hardware efficiency~\cite{vardoyan2023quantum,kar2024convexification}. 
Using the constants $d_j$ and network adjacency structure, we solve the MICP~\eqref{eq:modObjective}--\eqref{eq:zMICP} to obtain the optimal utility; its relaxation provides the first upper bound. 
The min-congestion-based upper bound is derived by solving the second MICP~\eqref{eq:optAssign}--\eqref{eq:zMICPmc} with $k$ binary variables. 
We use the MOSEK solver with CVXPY~\cite{diamond2016cvxpy} for our computations.
Since CVXPY only allows functions adhering to its disciplined convex programming syntax, we use piecewise linear approximation of $\hat{F}_i$~\eqref{eq:modObjective} with high granularity.

\looseness = -1 Both upper bounds and the corresponding randomized heuristics are compared against the optimum for SKF and negativity in Fig.~\ref{fig:whole} and for DE in Fig.~\ref{fig:DE}. 
We observe that the min-congestion-based upper bound often outperforms its counterpart, and its heuristic also achieves higher \textit{average} performance. 

\noindent \textbf{Approximation error}:
While the first MICP~\eqref{eq:modObjective}--\eqref{eq:zMICP} computes the maximum network utility exactly for negativity, we calculate the approximation error for SKF and DE by rerunning it with the concave underestimator $\Breve{F}_i$ in place of $\hat{F}_i$ in~\eqref{eq:modObjective}. 
The approximation error is not shown in Fig.~\ref{fig:whole}~and~\ref{fig:DE} as the highest observed relative error was $0.00051\%$ (ARNES, SKF, $10$ demands), empirically supporting our claim of accuracy of the formulation.
Finally, recall that the logarithmic utility in Fig.~\ref{fig:whole}~and~\ref{fig:DE} is equivalent to the utility in NUM parlance~\cite{kelly1997charging,vardoyan2023quantum}; we adopted this form~\cite{johansson1991introduction} only to motivate our formulation in Sect.~\ref{sec:assumptions}.

\begin{figure}[t]
    \centering
    \includegraphics[width=\columnwidth]{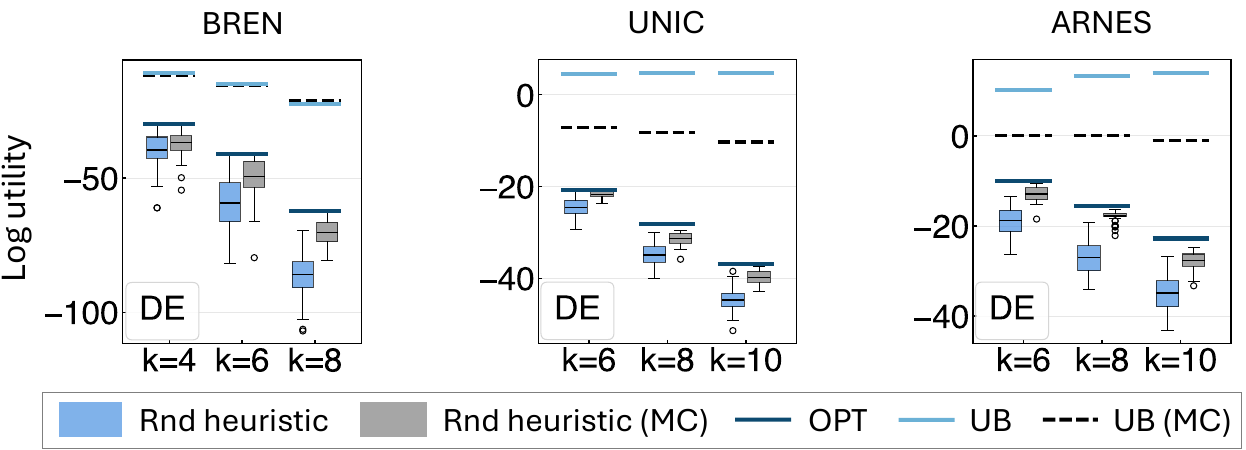}
\caption{\label{fig:DE}%
Performance of the randomized heuristics with the lower bound to DE~\eqref{eq:de} as the entanglement measure, the behavior is qualitatively similar to that of SKF in Fig~\ref{fig:whole}.}%
\vspace{0pt}
\end{figure}

\section{Conclusion}\label{sec:conclusion}
% % \vspace{-4pt}
% In this work, we considered the single-path version of the utility-based entanglement routing problem where we aimed to find an optimal routing for maximising the overall network utility.
% We specifically used SKF, a lower bound to DE and negativity as entanglement measures to quantify the network utility.
% We formulated the problem as an MICP, which was exact for negativity or when the network supported sufficiently high entanglement generation rates.
% For other cases, the MICP provided a \textit{close} approximation and we proposed a method to bound the approximation error.
% Further, for large networks, we proposed a randomised rounding-based heuristic and an associated upper bound by relaxing the MICP.
% We then introduced a min-congestion-based randomised heuristic and upper bound, which were faster to compute.
% Finally, we evaluated the heuristics and upper bounds vis-à-vis the actual optimum in real-world networks, where the latter approach often outperformed the former.
% Our framework can be used for extending the classical flow-based and QoS-aware routing ideas to quantum networks.
\looseness = -1 In this work, we considered the single-path version of the utility-based entanglement routing problem, aiming to determine optimal routes that maximize overall network utility.
Network utility was quantified using SKF, a lower bound to DE and negativity as entanglement measures. 
We formulated the problem as an MICP, which provides exact solutions when negativity is used or when the network supports sufficiently high entanglement generation rates.
In other cases, the formulation yields a reasonably close approximation, as seen in real-world examples. 
To ensure scalability in large networks, we proposed a randomized rounding-based heuristic and an upper bound derived from the relaxed MICP. 
We further proposed a computationally faster randomized heuristic and an upper bound based on min-congestion routing, which often outperformed their counterparts on real-world network topologies.
Our framework can be used for extending classical flow-based and QoS-aware routing principles to quantum networks, enabling fair and efficient allocation of quantum communication resources.

% % \vspace{-5pt}
\section*{Acknowledgement}
SK acknowledges support from NWO Vici Program under grant VI.C.222.029 awarded to Stephanie Wehner and thanks her for critical feedback on the manuscript.
{
}
% \vspace{-8pt}
\appendix
% \subsection{Approximating Entanglement Measures}
% % \vspace{-10pt}

% \looseness = -1 Due to space limitations, we refer the reader to~\cite[Eq.~16-18]{kar2024convexification} for definitions of the considered entanglement measures: SKF, a lower bound to DE (as adopted in~\cite{vardoyan2023quantum,kar2024convexification}) and negativity in terms of the Werner parameter.

We first provide definitions for the considered entanglement measures. 
The secret key fraction~\cite{shor2000simple} has the following form for Werner states with Werner parameter $w$:
\begin{align}\label{eq:sk}
    f_\text{sk}(w) \!=\! \max \Big(0,1\!+\!(1\!+\!w) \log_2 \frac{1\!+\!w}{2}+ 
    (1\!-\!w) \log_2 \frac{1\!-\!w}{2} \Big).
\end{align}

Further, following~\cite{vardoyan2023quantum}, we consider a lower bound to distillable entanglement.
For a Werner state with Werner parameter $w$, the lower bound can be expressed as
\begin{align}\label{eq:de}\vspace{-5pt}
    f_\text{de}(w) &\!=\! \max \Big(0,1+\frac{1+3w}{4}\log_2\Big(\frac{1+3w}{4}\Big)+ \nonumber \\
    &\frac{3(1-w)}{4}\log_2\Big(\frac{1-w}{4}\Big)\Big)~.
\end{align}
Finally, a Werner state with Werner parameter $w$, negativity can be expressed as
\begin{align}\label{eq:neg}\vspace{-8pt}
    f_\text{neg}(w) \!=\! \max\Big(0,\frac{3 w-1}{4}\Big)~.
\end{align}
% As mentioned earlier, we focus on three specific entanglement measures ($f_i$).
% Their definitions are provided below.
% \noindent \textit{Secret key fraction (SKF)}:
% %\noindent {\bfseries\small Secret key fraction (SKF)}: 
% the secret key fraction~\cite{shor2000simple} has the following form for Werner states with Werner parameter $w$:
% {\small
% \begin{align}\label{eq:sk}
%     f_\text{sk}(w) \!=\! \max \!\Big(0,1\!+\!(1\!+\!w) \log_2\! \frac{1\!+\!w}{2}\!+\! 
%     (1\!-\!w) \log_2\! \frac{1\!-\!w}{2} \Big).
% \end{align}
% }
% \noindent \textit{Distillable entanglement (DE)}: following previous works on QNUM~\cite{vardoyan2023quantum,kar2024convexification}, we actually consider a lower bound to distillable entanglement.
% For a Werner state with Werner parameter $w$, the lower bound is given by
% {\small
% \begin{align*}
%     f_\text{de}(w) \!=\! \max \Big(0,1\!+\!\frac{1\!+\!3w}{4}\log_2\!\Big(\frac{1\!+\!3w}{4}\Big)\!+\! 
%     \frac{3(1\!-\!w)}{4}\log_2\!\Big(\frac{1\!-\!w}{4}\Big)\Big)~.
% \end{align*}}
%
% \noindent \textit{Negativity}: for a Werner state with Werner parameter $w$, negativity is given by
% \begin{align}\label{eq:neg}% \vspace{-8pt}
%     f_\text{neg}(w) \!=\! \max\Big(0,\frac{3 w-1}{4}\Big)~.
% \end{align}

\looseness = -1 Recall that for each demand $i$, we transform the entanglement measure $f_i \!\in\! \{f_\text{sk}, f_\text{de}, f_\text{neg}\}$ to ${F_i(z) \!=\! \ln(f_i(e^z)), z \!\in\! (z_\text{min}^{(i)},0]}$.
As $F_i$ is concave for negativity but not for the other two measures, we look for a 

\begin{center}
\begin{tabular}{p{0.1\columnwidth}p{0.8\columnwidth}}
  \multicolumn{2}{c}{\textit{General}}\\[1pt]\hline
  $[\iota]$ & $\{1,2,\dots,\iota\}$ for $\iota \in \mathbb{N}$ \\
  $B_p$ & $p$-th row of a matrix $B$ \\
  $B_{:q}$ & $q$-th column of a matrix $B$ \\
  [5pt]
  \multicolumn{2}{c}{\textit{Links}}\\[1pt]\hline
  $l$ & number of links in the network \\
  $d_j$ & a positive constant characterising the physical attributes of the $j$-th link,
  see~\eqref{eq:genRate} \\
  $w_j$ & Werner parameter of a generated pair in link $j$ \\
  $\Vec{w}$ & $(w_1,w_2,\dots,w_l)$ \\
  $\mu_j$ & corresponding entanglement generation capacity of the $j$-th link, $\mu_j:= d_j(1-w_j)$ \\[5pt]
  \multicolumn{2}{c}{\textit{Demands}}\\[1pt]\hline
  \!$(s_i,t_i)$ & $i$th sender-receiver (SD) pair, i.e., demand\\
  $k$ & number of demands \\
  $p_i$ & number of routes serving demand $i$\\
  % $P_i$ & $\sum_{m \in [i]} p_m$\\
  $I_i$ & \looseness = -1 set of simple path (route) indices serving demand $i$ \\[5pt]
  % the $i$-th SD pair, $I_i := [P_i]\!\setminus\! [P_{i-1}]~([P_0]:=\emptyset)$ \\[5pt]
  % \multicolumn{2}{c}{\textit{Routes}}\\[1pt]\hline
  % $r$ & $P_k$ \\
  % $x_m$ & the rate allocated to the $m$-th route\\
  % $y_m$ & $\ln(x_m)$ \\[5pt]
  \multicolumn{2}{c}{\textit{Network and allocations}}\\[1pt]\hline
  $V$ & set of nodes\\
  % $V_0$ & set of terminal nodes, i.e., ${V_0 = \bigcup\limits_{i\in [k]}(\{s_i\}\cup\{t_i\})}$ \\
  $n$ & number of nodes in the network ($|V|$) \\
  % $a_{jm}$ & the binary variable taking value $1$ iff the $m$-th (simple) route passes through the $j$-th link\\
  $A$ & overall link-route incidence matrix encoding all simple paths corresponding to all demands \\
  $\mathcal{P}(A)$ & set of valid single-path routings of $k$ demands\\
  $\Tilde{A}$ & a link-route incidence matrix in $\mathcal{P}(A)$, ${A \!\in\! \{0,1\}^{l \times k}}$, $\Tilde{A}_{:i}\!=\!A_{:m}$ for some $m \!\in\! I_i$, $((\tilde{A}))_{ji} \!=\! \tilde{a}_{ji}$ \\
  % $A \in \{0,1\}^{l \times r}$ \\
  % $A_j$ & the $j$-th row of $A$, $A_j:=(a_{j1},a_{j2},\dots,a_{jr})$\\
  $u_i$ & end-to-end Werner parameter of demand $i$ given routing $\tilde{A}$, ${u_i := \prod\nolimits_{j \in [l]}w_j^{\tilde{a}_{ji}}}$ \\ 
  % $v_i(\vec{y})$ & $\prod\limits_{j=1}^l \big(1-\langle A_j,e^{\vec{y}} \rangle/d_j\big)^{a_{jm}}$, $\Vec{y} \!\in\! \mathbb{R}^r$\\
  $f_i$ & entanglement measure for demand $i$ \\
  $F_i(z)$ & $\!\ln f_i(e^{z})$ \\
  $\hat{F}_i$ & concave envelope of $F_i$ \\
  $x_i$ & rate allocation for demand $i$ \\
  $\Vec{x}$ & $(x_1,x_2,\dots,x_r)$ \\
  $x_{ij'}$ & rate allocated to demand $i$ on $j'$th directed link \\
  $\overset{\leftrightarrow}{x}$ & $(x_{11},x_{12},\dots,x_{1\, 2l},\dots,x_{k1},x_{k2},\dots,x_{k\, 2l})$ \\
  $y_{ij'}$ & $\mathbbm{1}_{x_{ij'}>0}$\\
  $\epsilon(j)$ & the set comprising two directed link indices corresponding to the undirected link $j$ \\
  $\delta^+(v)$ & index set of incoming links at $v$\\
  $\delta^-(v)$ & index set of outgoing links from $v$\\
  % $f_i\!:\![0,1] \!\to\! [0,\beta_i]$, non-decreasing and twice differentiable on $\{z\!:\!f_i(z)\!>\!0\}\!\setminus\!\{1\}$, and $f_i(0) \!=\! 0$ \\
  % $c^{(k)}$ & $\sup \{z: f_i(z)=0\}$\\
  % $T$ &$\{\Vec{y} \!\in\! \mathbb{R}^r\!\!:\langle A_j,e^{\vec{y}} \rangle \!<\!d_j ~ \forall j\}$\\
  % $S_i$ & $\{\Vec{y} \!\in\! T\!:v_i(\Vec{y})\!>\!c^{(i)}\!\}$\\
  % $F_i$ & $\ln f_i$,~ $F_i:(c^{(k)},1]\to \mathbb{R}$\\
  % $c_1^{(k)}$ & unique inflection point of $F_i$, $F_i$ is concave in $(c^{(k)},c_1^{(k)}]$ and convex in $(c_1^{(k)},1)$\\
  \hline
\end{tabular}
\captionof{table}{List of notations: the (undirected) link index $j \!\in\! [l]$, directed link index $j' \in [2l]$, and demand index $i \in [k]$.}
\vspace{2pt}
\label{tab:defs}
\end{center}

\begin{figure}[t]
\centering
\includegraphics[width=0.9\columnwidth]{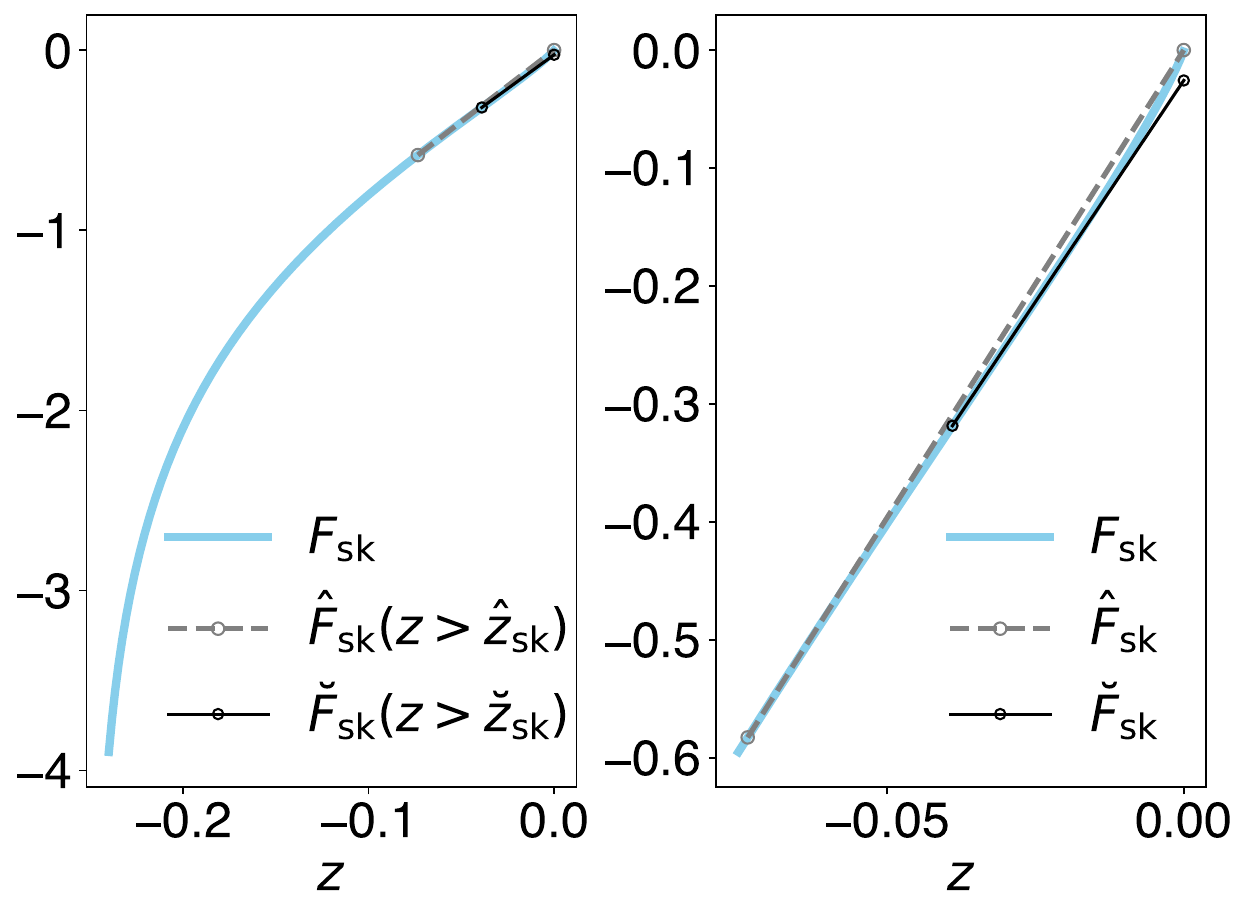}
\caption{\label{fig:fHat}%
Left subplot: estimators of $F_i$ when $f_i = f_\text{sk}$~\eqref{eq:sk}. The right subplot zooms in on the domain of non-concavity of $F_i$ to show the deviation of the estimates. Error bounds:
$|\!|\hat{F}_\text{sk}\!-\!F_\text{sk}|\!|\!<\!0.0165$, $|\!|{F}_\text{sk}\!-\!\breve{F}_\text{sk}|\!|\!<\!0.0258$, 
}%
\vspace{-5pt}
\end{figure}

\noindent concave overestimator $\hat{F}_i$.
The key property that we use for these two measures is that $F_i$ has a unique inflection point $\breve{z}_{(i)}$ and $F_i$ is concave in $(z_\text{min}^{(i)},\breve{z}_{(i)}]$.

\noindent \textbf{Concave enevlope} $\hat{F}_i$:
% Looking at the graph of $F_i$ for SKF and DE, 
We first derive the smallest linear upper bound to $F_i$ passing through $(0,F_i(0))$.
This can be found by calculating the minimal solution of ${F_i'(z) \!=\! F_i(z)/z}$, as $F_i(0)=0$.
% To that end, we find the minimal solution to:
% \begin{align}
%     F_i'(z) = \frac{F_i(z)-F_i(0)}{z-0} = \frac{F_i(z)}{z}\,.
% \end{align}
% The above equation is satisfied by at most two $z$'s as $F_i$ has a unique inflection point~\cite{kar2024convexification}.
The minimal solution $\hat{z}_{(i)}$ is found via the Newton-Raphson method by initializing near $z_\text{min}^{(i)}$.
% sufficiently small point satisfying $F_i(z)>-\infty$.
We then define
\begingroup
\setlength{\abovedisplayskip}{5pt}   % space above equations
\setlength{\belowdisplayskip}{5pt}   % space below equations
\setlength{\abovedisplayshortskip}{2pt}
\setlength{\belowdisplayshortskip}{2pt}
\begin{align}\label{eq:Fhat}
    \hat{F_i}(z) = 
    \begin{cases}
        & F_i(z)\,,~ z_\text{min}^{(i)} <z \le \hat{z}_{(i)} \\
        & F_i(\hat{z}_{(i)})+F_i'(\hat{z}_{(i)}) (z\!-\!\hat{z}_{(i)}) \,, ~ \hat{z}_{(i)}\!<\! z \!\le\! 0 
    \end{cases}
\end{align}
\endgroup
Now, considering the epigraph of $-F_i$ and drawing its convex hull, we can see that $\hat{F_i}$ is indeed the concave envelope of $F_i$.

\looseness = -1 If the $z$ values in the solution of the MICP~\eqref{eq:modObjective}--\eqref{eq:zMICP} are below $\hat{z}_{(i)}$, the MICP is exact for the entanglement routing problem.
Otherwise, we rerun the MICP~\eqref{eq:modObjective}--\eqref{eq:zMICP} with the underestimator $\breve{F}_i$ to bound the approximation error.

\noindent \textbf{Underestimator} $\breve{F}_i$: We simply use the tangent of $F_i$ at the inflection point $\breve{z}_{(i)}$ to define $\breve{F}_i$ beyond $\breve{z}_{(i)}$:
\begingroup
\setlength{\abovedisplayskip}{5pt}   % space above equations
\setlength{\belowdisplayskip}{5pt}   % space below equations
\setlength{\abovedisplayshortskip}{2pt}
\setlength{\belowdisplayshortskip}{2pt}
\begin{align}\label{eq:Fcheck}
    \breve{F_i}(z) = 
    \begin{cases}
        & F_i(z)\,,~ z_\text{min}^{(i)} <z \le \breve{z}_{(i)} \\
        & F_i(\breve{z}_{(i)})+F_i'(\breve{z}_{(i)}) (z\!-\!\breve{z}_{(i)}) \,, ~ \breve{z}_{(i)}\!<\! z \!\le\! 0 
    \end{cases}
\end{align}
\endgroup

We show the accuracy of estimation by plotting the linear parts of $\hat{F}_i$ and $\breve{F}_i$ vis-à-vis $F_i$ in Fig.~\ref{fig:fHat} for SKF.
The plots for DE are qualitatively similar with $|\!|\hat{F}_\text{de}\!-\!F_\text{de}|\!|\!<\!0.0135$, $|\!|{F}_\text{de}\!-\!\breve{F}_\text{de}|\!|\!<\!0.0212$.

\begin{thebibliography}{00}

\bibitem{bennett1984quantum} C. H. Bennett and G. Brassard, ``Quantum cryptography: Public key distribution and coin tossing'', in \textit{Proceedings of the IEEE International Conference on Computer Systems and Signal Processing}, 1984.
% pp. 175--179,

% \bibitem{ekert1992quantum} A. K. Ekert, ``Quantum cryptography and Bell's theorem'', in \textit{Quantum Measurements in Optics}. Springer, 1992, pp. 413--418.

\bibitem{giovannetti2004quantum} V. Giovannetti, S. Lloyd, and L. Maccone, ``Quantum-enhanced measurements: beating the standard quantum limit'', in \textit{Science}, vol. 306, no. 5700, pp. 1330--1336, 2004.

% \bibitem{jozsa2000quantum} R. Jozsa, D. S. Abrams, J. P. Dowling, and C. P. Williams, ``Quantum clock synchronization based on shared prior entanglement'', in \textit{Physical Review Letters}, vol. 85, no. 9, p. 2010, 2000.

\bibitem{broadbent2009universal} A. Broadbent, J. Fitzsimons, and E. Kashefi, ``Universal blind quantum computation'', in \textit{2009 50th Annual IEEE Symposium on Foundations of Computer Science}. IEEE, 2009, pp. 517--526.

\bibitem{kelly1997charging} F. Kelly, ``Charging and rate control for elastic traffic'', in \textit{European Transactions on Telecommunications}, vol. 8, no. 1, pp. 33--37, 1997.
% 

\bibitem{kelly1998rate} F. P. Kelly, A. K. Maulloo, and D. K. H. Tan, ``Rate control for communication networks: shadow prices, proportional fairness and stability'', in \textit{Journal of the Operational Research Society}, vol. 49, no. 3, 1998.
%  pp. 237--252,

\bibitem{vardoyan2023quantum} G. Vardoyan and S. Wehner, ``Quantum network utility maximization'', in \textit{2023 IEEE International Conference on Quantum Computing and Engineering (QCE)}, vol. 1, pp. 1238--1248, 2023.

\bibitem{kar2024convexification} S. Kar and S. Wehner, ``Convexification of the Quantum Network Utility Maximisation Problem'', in \textit{IEEE Transactions on Quantum Engineering}, 2024.

\bibitem{apostolopoulos1998quality} G. Apostolopoulos, R. Guerin, S. Kamat, and S. K. Tripathi, ``Quality of service based routing: A performance perspective,'' \textit{Proceedings of the ACM SIGCOMM}, 1998.
% pp. 17--28,

\bibitem{lin2003multi} X. Lin and N. B. Shroff, ``The multi-path utility maximization problem,'' in \textit{Proceedings of the Annual Allerton Conference on Communication, Control, and Computing}, vol. 41, no. 2, pp. 789--798, 2003.

\bibitem{wu2008utility} J. Wu, M. Lu, and F. Li, ``Utility-based opportunistic routing in multi-hop wireless networks,'' in \textit{Proceedings of the 28th International Conference on Distributed Computing Systems (ICDCS)}, pp. 470--477, 2008.

\bibitem{ahuja1993network} R. K. Ahuja, T. L. Magnanti, and J. B. Orlin, \textit{Network Flows: Theory, Algorithms, and Applications}. Prentice Hall, 1993.

\bibitem{charikar2018multi} M. Charikar, Y. Naamad, J. Rexford, and X. K. Zou, ``Multi-commodity flow with in-network processing,'' \textit{International Symposium on Algorithmic Aspects of Cloud Computing}, pp. 73--101, 2018.

\bibitem{lin2006utility} X. Lin and N. B. Shroff, ``Utility maximization for communication networks with multipath routing,'' \textit{IEEE Transactions on Automatic Control}, vol. 51, no. 5, pp. 766--781, 2006.

\bibitem{abane2025entanglement} A. Abane, M. Cubeddu, V. S. Mai, and A. Battou, ``Entanglement routing in quantum networks: A comprehensive survey,'' in \textit{IEEE Transactions on Quantum Engineering}, 2025.

\bibitem{van2013path} R. Van Meter, T. Satoh, T. D. Ladd, W. J. Munro, and K. Nemoto, ``Path selection for quantum repeater networks,'' in \textit{Networking Science}, vol. 3, no. 1, pp. 82--95, 2013.

\bibitem{gyongyosi2017entanglement} L. Gyongyosi and S. Imre, ``Entanglement-gradient routing for quantum networks,'' in \textit{Scientific Reports}, vol. 7, no. 1, p. 14255, 2017.

\bibitem{pant2019routing} M. Pant, H. Krovi, D. Towsley, L. Tassiulas, L. Jiang, P. Basu, D. Englund, and S. Guha, ``Routing entanglement in the quantum internet,'' in \textit{npj Quantum Information}, vol. 5, no. 1, p. 25, 2019.

\bibitem{chakraborty2020entanglement} K. Chakraborty, D. Elkouss, B. Rijsman, and S. Wehner, ``Entanglement distribution in a quantum network: A multicommodity flow-based approach,'' in \textit{IEEE Transactions on Quantum Engineering}, vol. 1, 2020.
% pp. 1--21,

\bibitem{li2022fidelity} J. Li, M. Wang, K. Xue, R. Li, N. Yu, Q. Sun, and J. Lu, ``Fidelity-guaranteed entanglement routing in quantum networks,'' in \textit{IEEE Transactions on Communications}, vol. 70, no. 10, pp. 6748--6763, 2022.

\bibitem{pouryousef2023quantum} S. Pouryousef, N. K. Panigrahy, and D. Towsley, ``A quantum overlay network for efficient entanglement distribution,'' in \textit{Proceedings of IEEE INFOCOM– IEEE Conference on Computer Communications}, 2023.
% pp. 1--10

\bibitem{le2022dqra} L. Le and T. N. Nguyen, ``DQRA: Deep quantum routing agent for entanglement routing in quantum networks,'' in \textit{IEEE Transactions on Quantum Engineering}, vol. 3, pp. 1--12, 2022.

\bibitem{plenio2005introduction} M. B. Plenio and S. Virmani, ``An introduction to entanglement measures'', in \textit{Quantum Inf. Comput.}, vol. 7, no. 1, pp. 1--51, 2007.

\bibitem{raghavan1987randomized} P. Raghavan and C. D. Tompson, ``Randomized rounding: a technique for provably good algorithms and algorithmic proofs'', in \textit{Combinatorica}, vol. 7, no. 4, pp. 365--374, 1987.
% \bibitem{fitzsimons2017unconditionally} J. F. Fitzsimons and E. Kashefi, ``Unconditionally verifiable blind quantum computation'', in \textit{Physical Review A}, vol. 96, no. 1, p. 012303, 2017.

\bibitem{cabrillo1999creation} C. Cabrillo, J. I. Cirac, P. Garcia-Fernandez, and P. Zoller, ``Creation of entangled states of distant atoms by interference'', in \textit{Physical Review A}, vol. 59, no. 2, pp. 1025, 1999.

\bibitem{dur2007entanglement} W. D{\"u}r and H. J. Briegel, ``Entanglement purification and quantum error correction'', in \textit{Reports on Progress in Physics}, vol. 70, no. 8, 2007.
% pp. 1381,

\bibitem{bennett1996mixed} C. H. Bennett, D. P. DiVincenzo, J. A. Smolin, and W. K. Wootters, ``Mixed-state entanglement and quantum error correction,'' in \textit{Physical Review A}, vol. 54, no. 5, pp. 3824, 1996.

\bibitem{pan1998experimental} J.-W. Pan, D. Bouwmeester, H. Weinfurter, and A. Zeilinger, ``Experimental entanglement swapping: entangling photons that never interacted'', in \textit{Physical Review Letters}, vol. 80, no. 18, p. 3891, 1998.

\bibitem{munro2015inside} W. J. Munro, K. Azuma, K. Tamaki, and K. Nemoto, ``Inside quantum repeaters'', in \textit{IEEE Journal of Selected Topics in Quantum Electronics}, vol. 21, no. 3, pp. 78--90, 2015.

\bibitem{johansson1991introduction} P. O. Johansson, ``An Introduction to Modern Welfare Economics'', \textit{Cambridge University Press}, 1991.

\bibitem{guruswami1999near} 
V. Guruswami, S. Khanna, R. Rajaraman et. al., 
``Near-optimal hardness results and approximation algorithms for edge-disjoint paths and related problems,'' 
\textit{Proceedings of the STOC}, 
pp. 19--28, 1999.


\bibitem{mccormick1976computability} G. P. McCormick, ``Computability of global solutions to factorable nonconvex programs: Part I—Convex underestimating problems,'' Mathematical Programming, vol. 10, no. 1, pp. 147--175, 1976.

\bibitem{gunluk2010perspective} O. G{\"u}nl{\"u}k and J. Linderoth, ``Perspective reformulations of mixed integer nonlinear programs with indicator variables,'' Mathematical Programming, vol. 124, no. 1, pp. 183--205, 2010.

\bibitem{boyd2004convex} S. Boyd and L. Vandenberghe, ``Convex optimization'', Cambridge University Press, 2004.

% \bibitem{vishnoi2018geodesic} N. K. Vishnoi, ``Geodesic convex optimization: Differentiation on manifolds, geodesics, and convexity'', in \textit{arXiv preprint arXiv:1806.06373}, 2018.

% \bibitem{leboudec2005rate} J. Y. Le Boudec, ``Rate adaptation, congestion control and fairness: A tutorial'', \textit{https://leboudec.github.io/leboudec/latex/cc/LEB3132.pdf}, November 22, 2005.

% \bibitem{johansson1991introduction} P. O. Johansson, ``An Introduction to Modern Welfare Economics'', \textit{Cambridge University Press}, 1991.

% \bibitem{rawls1971theory} J. Rawls, ``A Theory of Justice'', \textit{Harvard University Press}, 1971.

% \bibitem{palomar2006tutorial} D. P. Palomar and M. Chiang, ``A tutorial on decomposition methods for network utility maximization,'' in \textit{IEEE Journal on Selected Areas in Communications}, vol. 24, no. 8, pp. 1439-1451, Jul. 31, 2006.


% \bibitem{gauthier2023architecture} S. Gauthier, G. Vardoyan, and S. Wehner, ``An architecture for control of entanglement generation switches in quantum networks'', in \textit{IEEE Transactions on Quantum Engineering}, vol. 4, no. 01, pp. 1--17, 2023.

% \bibitem{pouryousef2023quantum} S. Pouryousef, H. Shapourian, A. Shabani, and D. Towsley, ``Quantum network planning for utility maximization'', in \textit{Proceedings of the 1st Workshop on Quantum Networks and Distributed Quantum Computing}, pp. 13--18, 2023.

% \bibitem{boyd2007tutorial} S. Boyd, S.-J. Kim, L. Vandenberghe, and A. Hassibi, ``A tutorial on geometric programming'', in \textit{Optimization and Engineering}, vol. 8, pp. 67--127, 2007.


% \bibitem{humphreys2018deterministic} P. C. Humphreys, N. Kalb, J. P. Morits, R. N. Schouten, R. F. Vermeulen, D. J. Twitchen, M. Markham, and R. Hanson, ``Deterministic delivery of remote entanglement on a quantum network,'' in \textit{Nature}, vol. 558, no. 7709, pp. 268–273, 2018.

% \bibitem{shor2000simple} P. W. Shor and J. Preskill, ``Simple proof of security of the BB84 quantum key distribution protocol'', in \textit{Physical Review Letters}, vol. 85, no. 2, pp. 441, 2000.

% \bibitem{nielsenChuang}
% Nielsen, M.A. and Chuang, I., "Quantum computation and quantum information", 2002.

% \bibitem{bennett1996mixed} C. H. Bennett, D. P. DiVincenzo, J. A. Smolin, and W. K. Wootters, ``Mixed-state entanglement and quantum error correction'', in \textit{Physical Review A}, vol. 54, no. 5, pp. 3824, 1996.

% \bibitem{horodecki1999general} M. Horodecki, P. Horodecki, and R. Horodecki, ``General teleportation channel, singlet fraction, and quasidistillation'', in \textit{Physical Review A}, vol. 60, no. 3, pp. 1888, 1999.

% \bibitem{matzner2024topology} 
% R. Matzner, A. Ahuja, R. Sadeghi et. al., 
% ``Topology Bench: systematic graph-based benchmarking for core optical networks,'' 
% \textit{Journal of Optical Communications and Networking}, 
% vol. 17, no. 1, pp. 7--27, 2024.
\bibitem{matzner2024topology} 
R. Matzner, A. Ahuja, R. Sadeghi et al., 
``Topology Bench: systematic graph-based benchmarking for core optical networks,'' 
\textit{Journal of Optical Communications and Networking}, 
vol. 17, no. 1, pp. 7--27, 2024. 
Dataset available at: \url{https://zenodo.org/records/13921775}.

\bibitem{diamond2016cvxpy} S. Diamond and S. Boyd, ``CVXPY: A Python-Embedded Modeling Language for Convex Optimization,'' \textit{Journal of Machine Learning Research}, vol. 17, no. 83, pp. 1--5, 2016.

\bibitem{shor2000simple} P. W. Shor and J. Preskill, ``Simple proof of security of the BB84 quantum key distribution protocol'', in \textit{Physical Review Letters}, vol. 85, no. 2, pp. 441, 2000.


% M. Doherty,  A. Beghelli, S. J. Savory, and P. Bayvel,

% \bibitem{vardoyan2024bipartite} 
% G. Vardoyan, E. van Milligen, S. Guha, S. Wehner, and D. Towsley, 
% ``On the bipartite entanglement capacity of quantum networks'', in 
% \textit{IEEE Transactions on Quantum Engineering}, 
% vol. 5, pp. 1--14, 2024.

% \bibitem{dieudonne1970foundations} J. Dieudonné, ``Foundations of Modern Analysis'', \textit{Read Books Ltd}, 2011

% \bibitem{zbMATH02560682} S. A. Gershgorin, ``{\"Uber die Abgrenzung der Eigenwerte einer Matrix}'', in \textit{Bull. Acad. Sci. URSS}, vol. 1931, no. 6, pp. 749--754, 1931.

\end{thebibliography}
\end{document}